\pgfplotsset{compat=1.12}
\pgfplotsset{compat=newest}
\newcommand{\dagg}{^\dagger}
\DeclareMathOperator{\Diag}{diag}
\DeclareMathOperator{\Span}{span}
\DeclareMathOperator{\nospacemod}{mod}
\newtheorem{theorem}{Theorem}
\newtheorem{corollary}[theorem]{Corollary}
\newtheorem{remark}[theorem]{Remark}
\begin{document}

\begin{center}{\Large \textbf{
Interrelations among frustration-free models via Witten's conjugation}
}\end{center}

\begin{center}
Jurriaan Wouters\textsuperscript{1}, Hosho Katsura\textsuperscript{2,3,4}, Dirk Schuricht\textsuperscript{1}\\

\end{center}


\begin{center}
{\small{\bf 1} Institute for Theoretical Physics, Center for Extreme Matter and Emergent Phenomena, Utrecht University, Princetonplein 5, 3584 CE Utrecht, The Netherlands
\\
{\bf 2} Department of Physics, Graduate School of Science, The University of Tokyo, 7-3-1, Hongo, Bunkyo-ku, Tokyo, 113-0033, Japan
\\
{\bf 3} Institute for Physics of Intelligence, The University of Tokyo, 7-3-1, Hongo, Bunkyo-ku, Tokyo, 113-0033, Japan
\\
{\bf 4} Trans-scale Quantum Science Institute, The University of Tokyo, 7-3-1, Hongo, Bunkyo-ku, Tokyo, 113-0033, Japan
\\
j.j.wouters@uu.nl, katsura@phys.s.u-tokyo.ac.jp, d.schuricht@uu.nl}
\end{center}

\begin{center}
\date OOctober 14th, 2021
\end{center}

\section*{Abstract}
{\bf We apply Witten’s conjugation argument~\cite{Witten82} to spin chains, where it allows us to derive frustration-free systems and their exact ground states from known results. We particularly focus on $\mathbb{Z}_p$-symmetric models, with the Kitaev and Peschel--Emery line of the axial next-nearest neighbour Ising (ANNNI) chain being the simplest examples. The approach allows us to treat two $\mathbb{Z}_3$-invariant frustration-free parafermion chains, recently derived by Iemini et al.~\cite{Iemini-17} and Mahyaeh and Ardonne~\cite{MahyaehArdonne18}, respectively, in a unified framework. We derive several other frustration-free models and their exact ground states, including $\mathbb{Z}_4$- and $\mathbb{Z}_6$-symmetric generalisations of the frustration-free ANNNI chain.}


\vspace{10pt}
\noindent\rule{\textwidth}{1pt}
\tableofcontents\thispagestyle{fancy}
\noindent\rule{\textwidth}{1pt}
\vspace{10pt}

\section{Introduction}
Strongly correlated quantum systems are notoriously hard to study. Even when restricted to one spatial dimension the applicability of analytical methods is rather limited. Notable exceptions are provided by systems like the quantum Ising or XY spin chain that can be mapped to effectively non-interacting models~\cite{Lieb-61}, thus allowing the determination of the full spectrum by elementary means. A second class of systems is provided by integrable models~\cite{Bethe31}. They also allow the determination of the full energy spectrum, although more sophisticated methods like the algebraic Bethe ansatz~\cite{KorepinBogoliubovIzergin93,SamajBajnok13} have to be employed and simple results in a closed form are usually not available. A third class of systems is given by the so-called frustration-free models~\cite{Tasaki20}. These are distinguished by the fact that the ground-state manifold can be given in an exact, closed form. In this paper we will discuss such frustration-free models and present an overarching framework connecting many of them. 

One of the first frustration-free models was described by Peschel and Emery~\cite{PeschelEmery81}. They realised that for a constrained set of couplings the ground state of the axial next-nearest neighbour Ising (ANNNI)~\cite{Selke88} model takes the simple form of a product state, thus facilitating the straightforward calculation of correlation functions. Along this Peschel--Emery line the model can be viewed as a deformation of the trivial ferromagnetic Ising model. Several generalisations to other two-dimensional models including the three-state Potts model were discovered in the following~\cite{Rujan82-1,Rujan82-2,Rujan84,PeschelTruong86}. Recently, frustration-free models of this type have been investigated in the context of Majorana zero modes~\cite{KST15,JevticBarnett17,WKS18} by employing the original results of Peschel and Emery.

Another famous example of a frustration-free model is the Affleck--Kennedy--Lieb--Tasaki (AKLT) chain~\cite{Affleck-87,Affleck-88,Tasaki20}, which was originally devised in the context of the Haldane conjecture for integer spin chains~\cite{Haldane83pl1,Haldane83prl1,Affleck89}. The idea to construct a parent Hamiltonian was subsequently used to construct further frustration-free models like the \emph{q}-deformed AKLT model~\cite{Batchelor-90,Kluemper-91}, valence bond solids with general Lie group symmetries~\cite{GRS07,SR08,Tu-08,Katsura-08,OrusTu11,RoyQuella18}, or supersymmetric systems~\cite{Arovas-09,HasebeKeisuke13}. As the ground state of the AKLT model can be written as a compact matrix product state it has served as the starting point for the development of the general theory of matrix product and tensor network states~\cite{Fannes-92,Perez-Garcia-07,Schuch-10,Matsui98,Ogata16-1,Ogata16-2} and their application in numerical simulations~\cite{Schollwoeck11,Orus14} as well as the classification of quantum phases and their symmetry protections~\cite{Pollmann-10PRB,Chen-11,Schuch-11,Pollmann-12,DuivenvoordenQuella13}.

Our investigation was motivated in particular by two recent works by Iemini et al.~\cite{Iemini-17} and Mahyaeh and Ardonne~\cite{MahyaehArdonne18}. They constructed two different, frustration-free $\mathbb{Z}_3$-clock models. The motivation for these studies was given by their relation to parafermions, thus naturally generalising Majorana zero modes to $\mathbb{Z}_3$-symmetric systems~\cite{Fendley12}. Like in the case of the Peschel--Emery line discussed above, both models can be viewed as deformations of a simple classical system, in this case the three-state zero-bias Potts chain. One of our main results is to reformulate both models in a unified framework, thus treating them on an equal footing and clarifying their relation (illustrated in Figure~\ref{fig:linkmodels}).

This will be achieved by applying Witten's conjugation argument~\cite{Witten82,Hagendorf13}, originally introduced for supersymmetric systems, to spin chains. Starting from a simple model with known ground-state manifold, we derive interacting deformations as well as their exact ground states. The explicit construction then allows the calculation of correlation functions and, in some cases, the proof of the existence of an energy gap. We will apply this line of argument to $\mathbb{Z}_p$-symmetric systems, with the two specific $\mathbb{Z}_3$-symmetric models mentioned above analysed in detail. Furthermore, we construct several new frustration-free models, including generalisations of the Peschel--Emery line to $\mathbb{Z}_4$- and $\mathbb{Z}_6$-symmetric systems.

In this context we note that a method very similar to the Witten conjugation has been applied in the field of matrix product states to construct frustration-free models from the respective parent Hamiltonians~\cite{Chen-11,Schuch-11,Darmawan-12,DarmawanBartlett14}. The framework of matrix product (or generalised valence bond solid) states also allows for the calculation of correlation functions, and provides the starting point to prove the existence of an energy gap for the corresponding parent Hamiltonians. These proofs are based either on the martingale method~\cite{Nachtergaele96} or finite-size criteria~\cite{Fannes-92,GossetMozgunov16,LemmMozgunov19,Lemm20}. The latter link the energy gap of a finite-size system to a lower bound on the energy gap in the thermodynamic limit. The first work following such an approach was done by Knabe~\cite{Knabe88}, who used exact diagonalisation on finite-size systems to obtain a lower bound for the energy gap in the spin-1 AKLT model. We will use this approach to obtain bounds for the energy gaps of several models considered in our manuscript. We note that our proofs can in principle be extended by using more advanced methods~\cite{Nachtergaele96,LemmMozgunov19}, however, the obtained bounds are physically less practical as we discuss for  instance for the model in Section~\ref{sec:Z6ANNNC}. In order to keep our discussion less abstract, we thus take a more explicit approach not relying on matrix product states in the following but note that many of the results we present below can be rephrased in such terms.
 
This article is organised as follows: In the next section we discuss Witten's conjugation argument and tailor it to frustration-free spin chains. Section~\ref{sec:FFrevisit} recalls some known families of frustration-free models that are rederived using the deformation approach. In Section~\ref{sec:Zpclockmodels} we introduce the necessary notations to discuss $\mathbb{Z}_p$-symmetric clock models. In Sections~\ref{sec:ZpXY} and~\ref{sec:ZpANNNC} we analyse two types of deformations, in particular covering the models introduced in References~\cite{Iemini-17,MahyaehArdonne18} in the special case $p=3$. In addition, we consider several frustration-free $\mathbb{Z}_p$-models. While Witten's conjugation argument applied here ensures the form of the ground state, it does not guarantee the existence of an energy gap. Therefore, in the appendix we apply Knabe's   method~\cite{Knabe88} to obtain lower bounds for the energy gap for some of the considered models.

\section{Conjugation argument}\label{sec:witconjarg}
Originally~\cite{Witten82} Witten introduced his conjugation argument in the context of supersymmetric quantum mechanical models. More specifically he discussed, given a supersymmetric Hamiltonian $H$, how to construct an inequivalent Hamiltonian $\tilde{H}$ with the same number of zero-energy states. In this section we recall this argument, already tailoring the notation to the spin-chain systems we will discuss in the following sections. For completeness we recall Witten's original argument in Appendix~\ref{app:Witten}.

We consider a lattice with a finite-dimensional Hilbert space for each of the lattice sites. More specifically, in this work we restrict ourselves to one-dimensional chains with open boundary conditions, and assume the local Hamiltonian to act non-trivially neighbouring sites only. We note, however, that the argument presented here is applicable more generally, for example, to periodic boundary conditions, higher-dimensional lattices or longer-ranged models. Coming back to our setup, we consider a Hamiltonian of the form 
\begin{equation}
H=\sum_{j=1}^{N-1}H_{j,j+1}=\sum_{j=1}^{N-1}L_{j,j+1}\dagg L_{j,j+1},
\label{eq:Hamiltonian}
\end{equation}
where each term\footnote{We use capital letters to denote operators acting on the Hilbert space of the full chain, with subindices indicating on which lattice sites they act non-trivially.} $H_{j,j+1}=L_{j,j+1}\dagg L_{j,j+1}$ acts non-trivially on the neighbouring lattice sites $j$ and $j+1$ only, and is positive semi-definite, $\bra{\Psi}H_{j,j+1}\ket{\Psi}\ge 0$ for all $\ket{\Psi}$.  Consequently the ground-state manifold $G$ is spanned by $\ket{\Psi_1},\ldots,\ket{\Psi_n}$, $1\le n$, with $L_{j,j+1}\ket{\Psi_i}=0$ for all $j$; in other words, $G$ is the intersection of the kernels of the operators $L_{j,j+1}$, $G=\bigcap_j\ker(L_{j,j+1})$.

The representation \eqref{eq:Hamiltonian} now allows us to say something about the ground states of a deformed/conjugated Hamiltonian. Consider an invertible operator $M_j$ that acts non-trivially on the local Hilbert space of lattice site $j$ only, with which we define an invertible operator acting non-trivially on the whole chain via $M=\prod_j M_j$.  Using this operator we can write down the conjugated operators as
\begin{equation}\label{eq:conjL}
\tilde{L}_{j,j+1}=ML_{j,j+1} M^{-1}=M_jM_{j+1}L_{j,j+1}M_{j+1}^{-1}M_j^{-1},
\end{equation}
where we used $[L_{j,j+1},M_k]=0$ for $k\neq j,j+1$. Now the deformed/conjugated local Hamiltonian is given by
\begin{equation}\label{eq:deformedham}
\tilde{H}=\sum_{j=1}^{N-1}\tilde{H}_{j,j+1}=\sum_{j=1}^{N-1}\tilde{L}_{j,j+1}\dagg C_{j,j+1} \tilde{L}_{j,j+1},
\end{equation}
where we have introduced the hermitian operator $C_{j,j+1}$ as additional degrees of freedom in the construction. The operator $C_{j,j+1}=K_{j,j+1}\dagg K_{j,j+1}$ is assumed to be positive definite, $\bra{\Psi}C_{j,j+1}\ket{\Psi}> 0$ for all $\ket{\Psi}$, and thus invertible. The product form of $M$ and the locality of $C_{j,j+1}$ and $L_{j,j+1}$ ensure that the resulting Hamiltonian is still local. Note that in general there is no unique annihilation operator $L_{j,j+1}$. Later in this section we will discuss the interplay between the freedom of $C_{j,j+1}$, $L_{j,j+1}$ and $M_j$. 

In this setting we can now prove the following theorem (see Reference~\cite{Witten82} and Appendix~\ref{app:Witten} for the original supersymmetric case): 
\begin{theorem}\label{thm:Witten}
The ground-state manifold $\tilde{G}$ of the conjugated Hamiltonian $\tilde{H}$ is given by $\tilde{G}=\Span\{M\ket{\Psi_1},\ldots,M\ket{\Psi_n}\}$, thus the ground-state degeneracies of $H$ and $\tilde{H}$ are identical. 
\end{theorem}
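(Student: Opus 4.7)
The plan is to show that $\tilde{G}=M(G)$, where $G=\bigcap_j \ker(L_{j,j+1})=\Span\{\ket{\Psi_1},\dots,\ket{\Psi_n}\}$, and that this identity preserves dimensions.

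First I would establish positivity of $\tilde{H}$. Writing $C_{j,j+1}=K_{j,j+1}\dagg K_{j,j+1}$, each summand of $\tilde{H}$ factors as $(K_{j,j+1}\tilde{L}_{j,j+1})\dagg(K_{j,j+1}\tilde{L}_{j,j+1})$ and is thus positive semi-definite. Consequently
\begin{equation*}
\bra{\Psi}\tilde{H}\ket{\Psi}=\sum_{j=1}^{N-1}\bigl\|K_{j,j+1}\tilde{L}_{j,j+1}\ket{\Psi}\bigr\|^2\geq 0,
\end{equation*}
with equality iff $K_{j,j+1}\tilde{L}_{j,j+1}\ket{\Psi}=0$ for every $j$. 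The strict positive definiteness of $C_{j,j+1}$ (not mere semi-definiteness) makes $K_{j,j+1}$ invertible, so this condition collapses to $\tilde{L}_{j,j+1}\ket{\Psi}=0$. The set of zero-energy states is therefore $\bigcap_j\ker(\tilde{L}_{j,j+1})$.

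Next I would invoke the conjugation relation \eqref{eq:conjL}: since $M$ is globally invertible, $\tilde{L}_{j,j+1}\ket{\Psi}=ML_{j,j+1}M^{-1}\ket{\Psi}=0$ is equivalent to $L_{j,j+1}M^{-1}\ket{\Psi}=0$. Setting $\ket{\Phi}=M^{-1}\ket{\Psi}$, a state $\ket{\Psi}$ lies in the zero-energy manifold of $\tilde{H}$ iff $\ket{\Phi}\in G$, which gives $\tilde{G}\subseteq M(G)$. Conversely, each $M\ket{\Psi_i}$ is annihilated by every $\tilde{L}_{j,j+1}$, since $\tilde{L}_{j,j+1}M=ML_{j,j+1}$ and $L_{j,j+1}\ket{\Psi_i}=0$; hence $M(G)\subseteq\tilde{G}$ and the minimal energy zero is actually attained. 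Finally, since $M$ is invertible it maps the linearly independent basis $\{\ket{\Psi_i}\}$ to a linearly independent set, so $\dim\tilde{G}=n=\dim G$, establishing equal ground-state degeneracies.

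The argument is largely bookkeeping once the two invertibility hypotheses are in hand, and I do not expect a genuinely hard step. The one subtlety worth flagging is precisely why positive definiteness of $C_{j,j+1}$, rather than mere positive semi-definiteness, is indispensable: a $K_{j,j+1}$ with non-trivial kernel could allow additional states to satisfy $K_{j,j+1}\tilde{L}_{j,j+1}\ket{\Psi}=0$ without being annihilated by $\tilde{L}_{j,j+1}$, spuriously enlarging the zero-energy subspace beyond $M(G)$ and breaking the equality of degeneracies.
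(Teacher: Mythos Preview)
Your proof is correct and follows essentially the same route as the paper: establish that $\ker(\tilde{H})=\bigcap_j\ker(\tilde{L}_{j,j+1})$ via the factorisation $C_{j,j+1}=K_{j,j+1}^\dagger K_{j,j+1}$ and positive definiteness, then identify this intersection with $M(G)$ using the conjugation relation and invertibility of $M$. The only cosmetic difference is that the paper applies $K_{j,j+1}^\dagger$ and invokes invertibility of $C_{j,j+1}$ rather than of $K_{j,j+1}$, but this is equivalent; your closing remark on why strict positive definiteness is indispensable is a nice addition.
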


We note that the states $M\ket{\Psi_i}$ do not form an orthonormal basis, but since $M$ is invertible the states $\{M\ket{\Psi_1},\ldots,M\ket{\Psi_n}\} $ are linearly independent.

\begin{proof}
	First we show that since $C_{j,j+1}$ is positive definite we have $\ker(\tilde{H})=\bigcap_j\ker(\tilde{L}_{j,j+1})$. 
	The proof is simple: Note that a priori $\bigcap_j\ker(\tilde{L}_{j,j+1})\subseteq\ker(\tilde{H})$. Now suppose $\ket{\Psi}\in\ker(\tilde{H})$, ie, $\tilde{H}\ket{\Psi}=0$, then 
	\begin{equation}
	\bra{\Psi}\tilde{H}\ket{\Psi}=\sum_j\bra{\Psi}\tilde{L}_{j,j+1}^\dagger C_{j,j+1} \tilde{L}_{j,j+1}\ket{\Psi}
	=\sum_j\big\Vert K_{j,j+1}\tilde{L}_{j,j+1}\ket{\Psi}\big\Vert^2=0.
	\end{equation}
	This implies $K_{j,j+1}\tilde{L}_{j,j+1}\ket{\Psi}=0$ for all $j$, and consequently $K_{j,j+1}^\dagger K_{j,j+1}\tilde{L}_{j,j+1}\ket{\Psi}=0$. Since $C_{j,j+1}=K_{j,j+1}\dagg K_{j,j+1}$ is invertible we deduce $\tilde{L}_{j,j+1}\ket{\Psi}=0$ for all $j$. Thus we have shown that $\ket{\Psi}\in \bigcap_j\ker(\tilde{L}_{j,j+1})$, which implies $\ker(\tilde{H})\subseteq\bigcap_j\ker(\tilde{L}_{j,j+1})$.
	
	Second we have to show $\bigcap_j\ker(\tilde{L}_{j,j+1})=\tilde{G}$. Note that for all $1\le j\le N-1$ and $1\le i\le n$ we have 
	\begin{equation}
	\tilde{L}_{j,j+1}M\ket{\Psi_i}=ML_{j,j+1}\ket{\Psi_i}=0,
	\end{equation}
	yielding $\tilde{G}\subseteq\bigcap_j\ker(\tilde{L}_{j,j+1})$.
	Conversely, suppose 
	$\ket{\tilde{\Psi}}\in\bigcap_j\ker(\tilde{L}_{j,j+1})$, then for all $1\le j\le N-1$ we find 
	\begin{equation}
	\tilde{L}_{j,j+1}\ket{\tilde{\Psi}}=ML_{j,j+1}M^{-1}\ket{\tilde{\Psi}}=0,
	\end{equation}
	from which we conclude that  $M^{-1}\ket{\tilde{\Psi}}\in\bigcap_j\ker(L_{j,j+1})$. Consequently we can expand the state as $M^{-1}\ket{\tilde{\Psi}}=\sum_{i}a_i\ket{\Psi_i}$ with suitable $a_i\in\mathbb{C}$, resulting in
	\begin{equation}
	\ket{\tilde{\Psi}}=\sum_{i=1}^na_iM\ket{\Psi_i}\in\tilde{G}.
	\end{equation}
	Therefore $\bigcap_j\ker(\tilde{L}_{j,j+1})\subseteq\tilde{G}$, which together with the above implies $\bigcap_j\ker(\tilde{L}_{j,j+1})=\tilde{G}$.
	
	Finally, we note that since $\tilde{H}$ is positive semi-definite, its ground-state manifold is given by its kernel (provided it is non-zero), thus resulting in $\tilde{G}=\ker(\tilde{H})$ as had to be shown. 
\end{proof}

We stress that the theorem above provides a direct way to determine the ground-state degeneracy of the deformed Hamiltonian. On the other hand, the theorem does not make any statement about the energy gap above the ground-state manifold or the excited states of the model. Thus, in Appendix~\ref{sec:gap} we will discuss a separate approach to prove the existence of a finite energy gap for some specific models.

Before applying the theorem to the construction of spin chain models, let us discuss the degree of freedom in the choices for $L_{j,j+1}$, $C_{j,j+1}$ and $M_j$. First, assuming a local Hilbert space of dimension $p$, we have the freedom to perform a local basis transformation\footnote{We use small letters to denote operators acting on the Hilbert space of one or two lattice sites.} $v_j$, with $v_j\in\text{U(}p\text{)}$, at each lattice site $j$. Under this the operators $M_j$ transform as 
\begin{equation}
M_j\rightarrow V_jM_jV_j^\dagger,\quad V_j=1\otimes \ldots\otimes 1\otimes v_j\otimes 1\otimes \ldots \otimes 1,
\end{equation}
and accordingly
\begin{equation}
L_{j,j+1}\rightarrow V_j V_{j+1} L_{j,j+1}V_j^\dagger V_{j+1}^\dagger, \quad 
C_{j,j+1}\rightarrow V_j V_{j+1} C_{j,j+1}V_j^\dagger V_{j+1}^\dagger.
\end{equation}
Using this we can always choose a suitable basis in the local Hilbert spaces to simplify $M_j$. Second, recalling that the deformed local Hamiltonian is given by 
\begin{equation}
\tilde{H}_{j,j+1}=(M\dagg)^{-1}L_{j,j+1}\dagg M\dagg C_{j,j+1} ML_{j,j+1}M^{-1},
\end{equation}
we can also perform a transformation on the bonds between lattice sites $j$ and $j+1$ with $u_{j,j+1}\in \text{U(}p^2\text{)}$. Specifically setting
\begin{equation}
L_{j,j+1}\rightarrow U_{j,j+1} L_{j,j+1},\quad C_{j,j+1}\rightarrow (M\dagg)^{-1} U_{j,j+1} M\dagg C_{j,j+1} M U_{j,j+1}\dagg M^{-1},
\end{equation}
where 
\begin{equation}
U_{j,j+1}=1\otimes \ldots\otimes 1\otimes u_{j,j+1}\otimes 1\otimes \ldots \otimes 1,
\end{equation}
we see that the local Hamiltonian remains invariant. In the examples in the following sections we will use these freedoms to simplify $L_{j,j+1}$.

\section{Frustration-free models revisited}\label{sec:FFrevisit}
In this section we will revisit several known frustration-free models within the framework of Witten's conjugation. We first consider two spin-1/2 models: the XY model~\cite{BarouchMcCoy71,Kurmann-82,MuellerShrock85} with transverse magnetic field and the ANNNI model~\cite{PeschelEmery81,KST15}. Then we review the \emph{q}-deformed XXZ chain~\cite{PasquierSaleur90,Alcaraz-95,GottsteinWerner95}, and finally we consider the \emph{q}-deformed AKLT model~\cite{Batchelor-90,Kluemper-91}.

\subsection{XY chain in transverse magnetic field}\label{sec:XY}
We rederive the frustration-free line for the XY model in a magnetic field. Our starting point is the classical Ising chain (which is equivalent to the Kitaev/Majorana chain~\cite{Kitaev01} in the decoupling limit), 
\begin{equation}
H_{j,j+1}=2-2\sigma_{j}^x\sigma_{j+1}^x
\label{eq:classicalIsing}
\end{equation}
with the exact ground states 
\begin{equation}
\ket{\Psi_{\pm}}=\frac{1}{2^{N/2}}\bigotimes_j\big(\ket{\uparrow}_j\pm\ket{\downarrow}_j\big),
\end{equation}
where $\ket{\uparrow}_j$ and $\ket{\downarrow}_j$ denote the eigenstates of $\sigma^z_j$ with eigenvalues $\pm 1$. We are looking for models that have a $\mathbb{Z}_2$-symmetry generated by $\prod_j\sigma_j^z$. We choose $M_j$ diagonal, real and positive, thus there is only one independent parameter in $M_j$, 
\begin{equation}
M_j=1\otimes \ldots\otimes 1\otimes m_j\otimes 1\otimes \ldots \otimes 1,\quad
m_j=\begin{pmatrix}
1&\\&r
\end{pmatrix},
\quad 0<r<\infty.
\end{equation}
The operator $m_j$ acts at lattice site $j$ only, with the matrix representation given in the basis $\{\ket{\uparrow}_j,\ket{\downarrow}_j\}$. Hence the deformed ground states are
\begin{equation}
\ket{\tilde{\Psi}_{\pm}}=M\ket{\Psi_{\pm}}.
\label{eq:GSZ2XY}
\end{equation}
Note that the states above are not orthogonal. Orthonormal ground states are instead given by
\begin{equation}
\ket{\tilde{\Phi}_{\pm}}=\frac1{N_\pm}\bigl(M\ket{\Psi_{+}}\pm M\ket{\Psi_{-}}\bigr)
\end{equation}
with suitable normalisations $N_\pm$. If we take 
\begin{equation}
L_{j,j+1}=\sigma^x_j-\sigma_{j+1}^x,\quad C_{j,j+1}=1,
\end{equation}
the deformation \eqref{eq:deformedham} gives the frustration-free line for the Kitaev chain\cite{Kitaev01}, ie, the Jordan--Wigner transform of the XY chain with magnetic field
\begin{equation}
\label{eq:XY}
\tilde{H}_{j,j+1}^{(1)}=-J_x\sigma_{j}^x\sigma_{j+1}^x-J_y\sigma_{j}^y\sigma_{j+1}^y+\frac{B^{(1)}}2(\sigma_j^z+\sigma_{j+1}^z)+\epsilon,
\end{equation}
with the parameters
\begin{equation}
J_x=\frac{(r+r^{-1})^2}{2},\quad
J_y=\frac{(r-r^{-1})^2}{2}, \quad
B^{(1)}=r^2-r^{-2},\quad\epsilon=r^2+r^{-2},
\end{equation}
which correspond to the parameters on the Barouch--McCoy circle~\cite{BarouchMcCoy71}. Due to Theorem~\ref{thm:Witten} the model $\tilde{H}^{(1)}=\sum_j\tilde{H}_{j,j+1}^{(1)}$ possesses a two-fold degenerate ground state. In Section~\ref{sec:Z3XY} we will discuss the $\mathbb{Z}_3$-generalisation~\cite{Iemini-17} of this model. Section~\ref{sec:ZpXY} will be dedicated to generalise the construction to arbitrary $\mathbb{Z}_p$-symmetry.

\subsection{ANNNI model}\label{sec:ANNNI}
For the second example we obtain an interacting parent Hamiltonian of \eqref{eq:GSZ2XY} by choosing 
\begin{equation}
C_{j,j+1}=\frac{r^2}2M_j^{-2}M_{j+1}^{-2},
\label{eq:CANNNI}
\end{equation}
which acts non-trivially on the neighbouring lattice sites $j$ and $j+1$, with $M_j$ and $L_j$ as in Section~\ref{sec:XY}. The resulting deformed local Hamiltonian is the ANNNI model
\begin{equation}
\label{eq:ANNNI}
\tilde{H}_{j,j+1}^{(2)}=-\sigma_{j}^x\sigma_{j+1}^x+J_z\sigma_{j}^z\sigma_{j+1}^z+\frac{B^{(2)}}2\bigl(\sigma_j^z+\sigma_{j+1}^z\bigr)+\epsilon,
\end{equation}
with 
\begin{equation}
J_z=\frac{(r-r^{-1})^2}{4},\quad B^{(2)}=\frac{r^2-r^{-2}}{2}, \quad\epsilon=\frac{(r+r^{-1})^2}{4}.
\label{eq:ANNNIparameters}
\end{equation}
The frustration-free line rediscovered here is the well-known Peschel--Emery line~\cite{PeschelEmery81,KST15} defined by the relation $B^{(2)}=2\sqrt{J_z(1+J_z)}$. The exact two-fold ground-state degeneracy of $\tilde{H}^{(2)}=\sum_j\tilde{H}_{j,j+1}^{(2)}$ is assured by Theorem~\ref{thm:Witten}. In Section~\ref{sec:Z3ANNNC} we discuss the $\mathbb{Z}_3$-generalisation~\cite{MahyaehArdonne18} of this setup, while in Section~\ref{sec:Z4ANNNC} we extend the construction to $\mathbb{Z}_4$-symmetry.

By construction the models \eqref{eq:XY} and \eqref{eq:ANNNI} share the same ground states. Thus their combination is also a parent Hamiltonian, 
\begin{equation}\label{eq:genXYZham}
\tilde{H}_{j,j+1}=\alpha_1\tilde{H}_{j,j+1}^{(1)}+\alpha_2\tilde{H}_{j,j+1}^{(2)},
\end{equation}
as long as $\alpha_i\ge0$. The parameters in the resulting spin model reproduce the frustration-free condition for the XYZ model~\cite{Kurmann-82,MuellerShrock85,Asoudeh-07}. Furthermore, the existence of an energy gap above the ground states for \eqref{eq:genXYZham} has been proven~\cite{KST15}. We also note that the construction above can be extended to inhomogeneous magnetic fields, in particular with an alternating bias~\cite{WKS18}, or higher-dimensional systems~\cite{Giampaolo-08,Giampaolo-09}.

Finally, we note that the states \eqref{eq:GSZ2XY} allow a straightforward calculation of correlation functions. For example, the two-point function of the Ising order parameter is independently of the separation $j-j'$ given by~\cite{BarouchMcCoy71}
\begin{equation}\label{eq:correlatorZ2}
\frac{\braket{\tilde{\Psi}_{\pm}|\sigma_j^x\sigma_{j'}^x|\tilde{\Psi}_\pm}}{\braket{\tilde{\Psi}_\pm|\tilde{\Psi}_\pm}}=\frac{4}{(r+r^{-1})^2},
\end{equation}
which simplifies to unity at the Ising point ($r=1$) as expected. 

\subsection{\emph{q}-deformed XXZ chain}
As a third example we show that the XXX chain and the \emph{q}-deformed XXZ chain are related via Witten's conjugation. We start with the local Hamiltonian of the spin-1/2 XXX Heisenberg chain 
\begin{equation}
H_{j,j+1} = 1-\bigl(\sigma^x_j \sigma^x_{j+1} + \sigma^y_j \sigma^y_{j+1} + \sigma^z_j \sigma^z_{j+1}\bigr).
\label{eq:HXXX}
\end{equation}
We first note that the local Hamiltonian satisfies $(H_{j,j+1})^2 = 4 H_{j,j+1}$, which means that the operators $H_{j,j+1}/4$ act as projectors. Thus we can write 
\begin{equation}
\frac{1}{4}H_{j,j+1} = |{\rm sing}\rangle_{j,j+1} \langle {\rm sing}|_{j,j+1}
\end{equation}
with\footnote{For the tensor product of states on neighbouring lattice sites we use the short-hand notation $\ket{\uparrow}_j \ket{\downarrow}_{j+1}=\ket{\uparrow}_j\otimes\ket{\downarrow}_{j+1}$ and so on.} $|{\rm sing}\rangle_{j,j+1} = (\ket{\uparrow}_j \ket{\downarrow}_{j+1}-\ket{\downarrow}_j \ket{\uparrow}_{j+1}) /\sqrt{2}$ denoting the singlet state on the lattice sites $j$ and $j+1$. On all other lattice sites $H_{j,j+1}$ acts trivially. To make the link to the notion introduced above we write
\begin{equation}
\frac{1}{4}H_{j,j+1}  = L_{j,j+1}\dagg L_{j,j+1}, \quad
L_{j,j+1} = \ket{\uparrow}_j \ket{\downarrow}_{j+1}\langle {\rm sing}|_{j,j+1}.
\end{equation}
Next we consider the generators of U$_q$(sl$_2$)~\cite{GomezRuiz-AltabaSierra96}
\begin{equation}
q^{S^z}, \quad S^{\pm}_q = \sum^N_{j=1} q^{\sigma^z_1/2} \cdots q^{\sigma^z_{j-1}/2} 
\, \sigma^{\pm}_j \, q^{-\sigma^z_{j+1}/2} \cdots q^{-\sigma^z_N/2},
\end{equation}
where we assume $q\in\mathbb{R}$, $q>0$, and 
\begin{equation}
S^z =\frac{1}{2}\sum^N_{j=1} \sigma^z_j, \quad \sigma_j^{\pm}=\frac{\sigma_j^x \pm {\rm i} \sigma_j^y}{2}.
\end{equation}
These generators satisfy the algebra 
\begin{equation}
q^{S^z} S^{\pm}_q q^{-S^z} = q^{\pm 1} S^{\pm}_q, \quad
[ S^+_q, S^-_q ] = \frac{q^{2S^z}-q^{-2S^z}}{q-q^{-1}},
\end{equation}
which reduce to the standard relations among the generators of SU(2) in the limit $q \to 1$. 

In order to proceed, we next define the operator $M$ via 
\begin{equation}
M(q) = q^{-\sigma^z_1/2} \cdots q^{-j \sigma^z_j /2} \cdots q^{-N \sigma^z_N/2}, 
\end{equation}
with the inverse given by
\begin{equation}
M(q)^{-1}= q^{\sigma^z_1/2} \cdots q^{j \sigma^z_j /2} \cdots q^{N\sigma^z_N/2}. 
\end{equation}
With this one gets
\begin{equation}
\tilde{L}_{j,j+1} = M(q) L_{j,j+1} M(q)^{-1} 
=\sqrt{\frac{1+q^2}{2}} \ket{\uparrow}_j \ket{\downarrow}_{j+1} \langle {\rm sing}(q)|_{j,j+1},
\end{equation}
where the \emph{q}-deformed singlet state is given by
\begin{equation}
|{\rm sing}(q) \rangle_{j,j+1} = \frac{1}{ \sqrt{q+q^{-1}} } 
(q^{-1/2} \ket{\uparrow}_j \ket{\downarrow}_{j+1} -q^{1/2} \ket{\downarrow}_j \ket{\uparrow}_{j+1}).
\end{equation}
Thus we obtain 
\begin{eqnarray}
\tilde{L}_{j,j+1}\dagg \tilde {L}_{j,j+1} = 
\frac{1+q^2}{2} |{\rm sing} (q)\rangle_{j,j+1} \langle {\rm sing} (q)|_{j,j+1}, 
\end{eqnarray}
which is manifestly U$_q$(sl$_2$) invariant as it is the projection onto the \emph{q}-deformed singlet state on the bond $(j,j+1)$. A straightforward calculation choosing $C_{j,j+1}=1$ shows that 
\begin{eqnarray}
\frac{1}{4}\tilde{H}_{j,j+1}\!\!\!&=&\!\!\!\tilde{L}_{j,j+1}\dagg \tilde {L}_{j,j+1} \\
\!\!\!&=&\!\!\! -\frac{q}{4}
\left[ 
\sigma^x_j \sigma^x_{j+1} + \sigma^y_j \sigma^y_{j+1}
+\frac{q+q^{-1}}{2}\bigl(\sigma^z_j \sigma^z_{j+1}-1\bigr) +\frac{q-q^{-1}}{2}\bigl(\sigma^z_j - \sigma^z_{j+1}\bigr)
\right], \qquad
\end{eqnarray}
which, up to the prefactor $q$, is the local Hamiltonian of the \emph{q}-deformed XXZ chain~\cite{PasquierSaleur90,Alcaraz-95,GottsteinWerner95,GomezRuiz-AltabaSierra96}. 

After deriving the Hamiltonian, let us consider the ground states in more detail. The ground states of the Heisenberg chain \eqref{eq:HXXX} are given by\footnote{We note that the subscript refers to the deformation parameter, ie, $S_1^-\equiv S_{q=1}^-$.}
\begin{equation}
(S_1^-)^i\ket{\Uparrow}, \quad i=0,1,\ldots,N, \quad\text{with}\quad\ket{\Uparrow}=\ket{\uparrow\cdots\uparrow}.
\end{equation}
Consequently, according to Theorem~\ref{thm:Witten} the ground states of the \emph{q}-deformed model read
\begin{equation}
M(q)\,(S^-_1)^i\ket{\Uparrow}\propto (\tilde{S}_1^-)^i\ket{\Uparrow}\quad\text{with}\quad\tilde{S}^-_1=M(q)S^-_1M(q)^{-1}=\sum_{j=1}^N q^j \sigma_{j}^-.
\end{equation}
However, the U$_q$(sl$_2$) algebra dictates that the ground-state manifold is spanned by
\begin{equation}
(S^-_q)^i\ket{\Uparrow}.
\end{equation}
By induction we will show that there is a correspondence (up to normalisation) between these sets of states, ie,
\begin{equation}\label{eq:qXXZinductassump}
(S^-_q)^i\ket{\Uparrow}\propto(\tilde{S}_1^-)^i\ket{\Uparrow}.
\end{equation}
Obviously this relation holds for $i=0$. Now suppose that \eqref{eq:qXXZinductassump} is true up to $i-1$. If we write
\begin{equation}
S^{\pm}_q = q^{-S^z\pm\frac12}\sum^N_{j=1} q^{\sigma^z_1} \cdots q^{\sigma^z_{j-1}} 
\, \sigma^{\pm}_j,
\end{equation}
then
\begin{align}
(S^-_q)^{i}\ket{\Uparrow}&\propto S^-_q(\tilde{S}_1^-)^{i-1}\ket{\Uparrow}=(i-1)! S^-_q\sum_{j_1<\cdots<j_{i-1}}q^{j_1+\cdots+j_{i-1}}\sigma_{j_1}^-\cdots\sigma_{j_{i-1}}^-\ket{\Uparrow}\label{eq:SminusqXXZ1}\\
&=(i-1)!  q^{-\frac{N+1}2}\frac{q^{i}-q^{-i}}{q-q^{-1}}\sum_{j_1<\cdots<j_i}q^{j_1+\cdots+j_i}\sigma_{j_1}^-\cdots\sigma_{j_i}^-\ket{\Uparrow}\label{eq:SminusqXXZ2}\\
&=\frac{q^{-\frac{N+1}2}}{i}\frac{q^{i}-q^{-i}}{q-q^{-1}}(\tilde{S}_1^-)^{i}\ket{\Uparrow},\label{eq:SminusqXXZ3}
\end{align}
where the precise prefactor is in fact irrelevant for our purpose. This shows that the relation \eqref{eq:qXXZinductassump} is indeed fulfilled, and thus that the ground states of the \emph{q}-deformed model are given by $M(q)\,(S^-_1)^i\ket{\Uparrow}$.

\subsection{\emph{q}-deformed AKLT chain}\label{sec:qAKLT}
Arguably one of the most prominent frustration-free models is the AKLT chain~\cite{Affleck-87,Affleck-88,Tasaki20}. Even though the ground state of this system is a matrix product state, we will see that we can still employ the tools outlined above to derive its $q$-deformed generalisation~\cite{Batchelor-90,Kluemper-91,Kluemper-92}.

We start with the original AKLT chain written as 
\begin{equation}
H=\sum_{j}H_{j,j+1},\quad H_{j,j+1}\equiv\sum_{m=-2}^2\ket{\psi_m}_{j,j+1}\bra{\psi_m}_{j,j+1},
\label{eq:AKLT}
\end{equation}
where $H_{j,j+1}$ is the projector onto the subspace of total spin-2 on the neighbouring lattice sites $j$ and $j+1$. It can be written in terms of the corresponding eigenstates $\ket{\psi_m}_{j,j+1}$ and acts trivially on all other lattice sites. The eigenstates are explicitly given by
\begin{align}
&\ket{\psi_2}_{j,j+1}=\ket{+}_j\ket{+}_{j+1},\quad\ket{\psi_1}_{j,j+1}=\frac{1}{\sqrt{2}}\left(\ket{+}_j\ket{0}_{j+1}+\ket{0}_j\ket{+}_{j+1}\right),\nonumber\\
&\ket{\psi_0}_{j,j+1}=\frac1{\sqrt{6}}\left(\ket{+}_j\ket{-}_{j+1}+\ket{-}_j\ket{+}_{j+1}+2\ket{0}_j\ket{0}_{j+1}\right),\nonumber\\
&\ket{\psi_{-1}}_{j,j+1}=\frac{1}{\sqrt{2}}\left(\ket{0}_j\ket{-}_{j+1}+\ket{-}_j\ket{0}_{j+1}\right),\quad\ket{\psi_{-2}}_{j,j+1}=\ket{-}_j\ket{-}_{j+1},
\end{align}
with $\ket{\pm}_j$, $\ket{0}_j$ denoting the eigenstates of the spin-1 operator $S^z_j$ at a given lattice site $j$. Note that since $H_{j,j+1}$ is a projector, we can match our convention by simply setting $L_{j,j+1}=H_{j,j+1}$. For the deformation we choose ($q\in\mathbb{R}$, $q>0$)
\begin{equation}
M(q)=\prod_j M_j(q), \hspace{1.5cm} M_j(q)=q^{-2jS_j^z}\left(\frac{q+q^{-1}}2\right)^{(S_j^z)^2/2}.
\end{equation}
and we define \emph{q}-deformed states
\begin{align}
&\ket{\tilde{\psi}_2^q}_{j,j+1}=\ket{+}_j\ket{+}_{j+1},\quad\ket{\tilde{\psi}_1^q}_{j,j+1}=\frac{1}{\sqrt{1+q^4}}\left(\ket{+}_j\ket{0}_{j+1}+q^2\ket{0}_j\ket{+}_{j+1}\right),\nonumber\\*
&\ket{\tilde{\psi}_0^q}_{j,j+1}=\frac{q^{-2}\ket{+}_j\ket{-}_{j+1}+q^2\ket{-}_j\ket{+}_{j+1}+(q+q^{-1})\ket{0}_j\ket{0}_{j+1}}{\sqrt{q^4+q^{-4}+(q+q^{-1})^2}},\nonumber\\*
&\ket{\tilde{\psi}_{-1}^q}_{j,j+1}=\frac{1}{\sqrt{1+q^4}}\left(\ket{0}_j\ket{-}_{j+1}+q^2\ket{-}_j\ket{0}_{j+1}\right),\quad\ket{\tilde{\psi}_{-2}^q}_{j,j+1}=\ket{-}_j\ket{-}_{j+1}.
\end{align}
We can then work out that the conjugated annihilation operator $\tilde{L}_{j,j+1}$ is given by 
\begin{align}
\tilde{L}_{j,j+1}\equiv&\ket{\tilde{\psi}^{q^{-1}}_{2}}_{j,j+1}\bra{\tilde{\psi}^q_{2}}_{j,j+1}+\ket{\tilde{\psi}^{q^{-1}}_{-2}}_{j,j+1}\bra{\tilde{\psi}^q_{-2}}_{j,j+1}\nonumber\\*
&+a(q) \left(\ket{\tilde{\psi}^{q^{-1}}_{1}}_{j,j+1}\bra{\tilde{\psi}^q_{1}}_{j,j+1}+\ket{\tilde{\psi}^{q^{-1}}_{-1}}_{j,j+1}\bra{\tilde{\psi}^q_{-1}}_{j,j+1}\right)
+b(q) \ket{\tilde{\phi}_{0}}_{j,j+1}\bra{\tilde{\psi}^q_{0}}_{j,j+1}
\end{align}
with the auxiliary state
\begin{equation}
\ket{\tilde{\phi}_0}_{j,j+1}=q^{2}\ket{+}_j\ket{-}_{j+1}+q^{-2}\ket{-}_j\ket{+}_{j+1}+\frac4{q+q^{-1}}\ket{0}_j\ket{0}_{j+1}
\end{equation}
and the parameters
\begin{equation}
a(q)=\frac{q^2+q^{-2}}2,\quad b(q)=\frac{(q^2+q^{-2})(q^2+q^{-2}+1)}6.
\end{equation}
Now we choose $C_{j,j+1}$ as 
\begin{align}
C_{j,j+1}=&\ket{\tilde{\psi}^{q^{-1}}_{2}}_{j,j+1}\bra{\tilde{\psi}^{q^{-1}}_{2}}_{j,j+1}+\ket{\tilde{\psi}^{q^{-1}}_{-2}}_{j,j+1}\bra{\tilde{\psi}^{q^{-1}}_{-2}}_{j,j+1}\nonumber\\*
&+\frac1{a(q)^2} \left(\ket{\tilde{\psi}^{q^{-1}}_{1}}_{j,j+1}\bra{\tilde{\psi}^{q^{-1}}_{1}}_{j,j+1}+\ket{\tilde{\psi}^{q^{-1}}_{-1}}_{j,j+1}\bra{\tilde{\psi}^{q^{-1}}_{-1}}_{j,j+1}\right)\nonumber\\*
&+\frac1{b(q)^2} \ket{\tilde{\phi}_{0}}_{j,j+1}\bra{\tilde{\phi}_{0}}_{j,j+1},
\end{align}
such that the deformed local Hamiltonian becomes the projector
\begin{equation}
\tilde{H}_{j,j+1}=\tilde{L}_{j,j+1}\dagg C_{j,j+1}\tilde{L}_{j,j+1}\equiv\sum_{m=-2}^2\ket{\tilde{\psi}_m^{q}}_{j,j+1}\bra{\tilde{\psi}_m^{q}}_{j,j+1}.
\end{equation}
Hence we obtain the $q$-deformed AKLT model~\cite{Batchelor-90,Kluemper-91,Kluemper-92}.

The above result shows the deformation at the level of the Hamiltonian. Let us also look explicitly at the ground state. The four ground states of the undeformed AKLT chain can be written in the matrix product state representation as 
\begin{equation}
\begin{pmatrix}
\ket{\Psi_\text{AKLT}^{1,1}}&\ket{\Psi_\text{AKLT}^{1,2}}\\[1mm]\ket{\Psi_\text{AKLT}^{2,1}}&\ket{\Psi_\text{AKLT}^{2,2}}\end{pmatrix}
=A_1\cdots A_L,\text{ with }A_j=\begin{pmatrix}
\ket{0}_j&-\sqrt{2}\ket{+}_j\\[1mm]
\sqrt{2}\ket{-}_j&-\ket{0}_j
\end{pmatrix}.
\end{equation}
According to Theorem~\ref{thm:Witten}, the ground state of the \emph{q}-deformed model is generated by the matrix
\begin{equation}
\tilde{A}_j=\begin{pmatrix}
\ket{0}_j&-q^{-2j}\sqrt{q+q^{-1}}\ket{+}_j\\
q^{2j}\sqrt{q+q^{-1}}\ket{-}_j&-\ket{0}_j
\end{pmatrix}.
\end{equation}
Generically a matrix product state is defined up to a gauge freedom. If we take 
\begin{equation}
f_{j-1,j}=\begin{pmatrix}
q^j&\\&q^{-(j-1)}
\end{pmatrix},
\end{equation}
we can redefine the matrix representation as 
\begin{equation}
\tilde{A}_j^{\rm{tr}}=f_{j-1,j}\tilde{A}_jf_{j,j+1}^{-1}=\begin{pmatrix}
q^{-1}\ket{0}_j&-\sqrt{q+q^{-1}}\ket{+}_j\\
\sqrt{q+q^{-1}}\ket{-}_j&-q\ket{0}_j
\end{pmatrix},
\end{equation}
which is identical to the one given in References~\cite{Kluemper-91,Kluemper-92} for the ground state of the \emph{q}-deformed AKLT chain. 

Finally we note that a similar derivation to the one presented in this section can be used to relate the AKLT chain \eqref{eq:AKLT} to a frustration-free point in the (representation) symmetry protected phase of $S_3$-invariant chains recently studied by O'Brien et al.~\cite{OBrien-20}.

\section{Introduction to $\mathbb{Z}_p$-clock models}\label{sec:Zpclockmodels}
The rest of the paper considers $\mathbb{Z}_p$-clock models and frustration-free systems of this type. Therefore, let us first briefly review the $\mathbb{Z}_p$-clock algebra.
Consider a local $p$-dimensional Hilbert space and two local operators $\sigma$ and $\tau$ satisfying 
\begin{equation}
\sigma^p=\tau^p=1,\quad\sigma^{p-1}=\sigma\dagg,\quad\tau^{p-1}=\tau\dagg,\quad\sigma\tau=\omega\tau\sigma,
\label{eq:clockalgebra}
\end{equation}
where $\omega=\exp(2\pi\text{i}/p)$ is the $p$th root of unity. Denoting the eigenstates of $\sigma$ and $\tau$ by $\ket{\sigma,i}$ and $\ket{\tau,i}$ with $i=0,\ldots,p-1$ respectively, the action of the operators is given by 
\begin{align}
&\sigma\ket{\sigma,i}=\omega^i\ket{\sigma,i},\quad \tau\ket{\sigma,i}=\ket{\sigma,i+1},\\
&\tau\ket{\tau,i}=\omega^i\ket{\tau,i},\quad\sigma\ket{\tau,i}=\ket{\tau,i-1},
\end{align}
where $i\pm1$ has to be taken modulo $p$. The states $\ket{\sigma,i}$ can be represented in terms of the states $\ket{\tau,i}$ as 
\begin{equation}
\ket{\sigma,i}=\frac1{\sqrt{p}}\left(\ket{\tau,0}+\omega^i\ket{\tau,1}+\cdots+\omega^{(p-1)i}\ket{\tau,p-1}\right).
\end{equation}

The Potts/clock model is a generalisation of the Ising model. Here we start with the counterpart of the classical Ising chain \eqref{eq:classicalIsing}, namely the classical Potts model, whose local Hamiltonian is given by 
\begin{equation}
H_{j,j+1}=2-\sigma_{j}\sigma_{j+1}\dagg-\sigma_{j}\dagg\sigma_{j+1},
\label{eq:classicalZ3}
\end{equation}
where $\sigma_j$ and $\tau_j$ denote the operators $\sigma$ and $\tau$ introduced above, now acting non-trivially on the local Hilbert space of site $j$. The classical Potts model possesses a $p$-fold degenerate ground state
\begin{equation}
\label{eq:pottsGS}
\ket{\Psi_i}=\bigotimes_j\ket{\sigma,i}_j
\end{equation}
with $\ket{\sigma,i}_j$ denoting the eigenstates of $\sigma_j$. Furthermore, the system has an energy gap above the ground states.  

Finally, we note that the clock operators $\sigma_j$ and $\tau_j$ have a parafermionic dual by virtue of the Fradkin--Kadanoff transformation~\cite{FradkinKadanoff80}, which is the generalisation of the Jordan--Wigner transformation to $\mathbb{Z}_p$-symmetry. The resulting parafermions can be regarded as a generalisation of Majorana fermions~\cite{Fendley12}.

We can already discuss the most general form of deformation that we consider in the rest of the paper. So far the only requirement for $M_j$ is the invertibility. In this work we restrict ourselves to models that preserve $\mathbb{Z}_p$-symmetry generated by 
\begin{equation}\label{eq:Zpsymmetry}
\omega^P=\prod_j \tau_j.
\end{equation} 
Since $M$ has to commute with $\omega^P$, the local operator $m_j$ has to be diagonal in the $\tau$-basis, ie,
\begin{equation}\label{eq:Mexpanded}
m_j=\begin{pmatrix}
\alpha_0&&&\\
&\alpha_1&&\\
&&\ddots&\\
&&&\alpha_{p-1}
\end{pmatrix}=\frac{1}{p}\sum_{k,l=0}^{p-1}\alpha_k\omega^{-kl}\tau^l
\end{equation}
for $\alpha_k\in\mathbb{C}$. Note that we can take out an overall scaling factor, so we end up with $p-1$ independent complex coefficients $\alpha_k/\alpha_0$, $k=1,\ldots,p-1$. For now we will leave it in the most general form. In line with the cyclicity of the algebra, the coefficients $\alpha_k$ are also defined modulo $p$,
\begin{equation}
\alpha_k=\alpha_{k\nospacemod p},
\end{equation}
for instance $\alpha_{-k}=\alpha_{p-k}$. Later we will see that in specific examples we get more constraints on the coefficients $\alpha_k$.

Starting with the ground states \eqref{eq:pottsGS} we obtain the deformed states by acting with the operator $M=\prod_jM_j=\bigotimes_j m_j$, ie, 
\begin{equation}
\ket{\tilde{\Psi}_i}=M\ket{\Psi_i}=\bigotimes_jm_j\ket{\sigma,i}_j.
\end{equation}
This form immediately allows us to calculate correlation functions. For example, the two-point function of the order parameter $\sigma$ becomes
\begin{equation}\label{eq:correlatorZp}
\left|\frac{\braket{\tilde{\Psi}_{i}|\sigma_j\sigma_{j'}\dagg|\tilde{\Psi}_i}}{\braket{\tilde{\Psi}_i|\tilde{\Psi}_i}}\right|=\frac{|\sum_k \alpha_k^*\alpha_{k+1}|^2}{(\sum_k|\alpha_k|^2)^2}\le 1,
\end{equation}
where the upper bound is obtained by virtue of the Schwarz inequality. Other correlation functions can be obtained in a similar way. In the following sections we will derive the parent Hamiltonian for the deformed ground states.

\section{Frustration-free $\mathbb{Z}_p$-generalisations of the XY chain}\label{sec:ZpXY}
In this section we generalise the $\mathbb{Z}_2$-XY chain discussed in Section~\ref{sec:XY} to arbitrary $\mathbb{Z}_p$-symmetry. Specifically we use the term XY in the sense that we take $L_{j,j+1}$ and $C_{j,j+1}$ of the following form
\begin{equation}
L_{j,j+1}=\sigma_{j}-\sigma_{j+1},\quad C_{j,j+1}=1.
\label{eq:XYchoice}
\end{equation}
Furthermore we require the resulting model to possess $\omega^P$-symmetry, which fixes $m_j$ to be given by~\eqref{eq:Mexpanded}. In the case $p=3$ we recover a model recently studied by Iemini et al.~\cite{Iemini-17}, see Section~\ref{sec:ZpXYreal}.

For the choices \eqref{eq:Mexpanded}~and~\eqref{eq:XYchoice} it is straightforward to work out the conjugated Hamiltonian (we set $\alpha_{-1}\equiv\alpha_{p-1}$ to lighten the notation) 
\begin{equation}
\label{eq:lindbladexpansion}
\tilde{L}_{j,j+1}=\frac{1}{p}\sum_{k,l=0}^{p-1}\frac{\alpha_{k-1}}{\alpha_{k}}\omega^{-kl}\left(\sigma_j\tau_j^{l}-\sigma_{j+1}\tau_{j+1}^{l}\right),
\end{equation}
where we used [see Equation~\eqref{eq:clockalgebra}]
\begin{equation}
M_j\sigma_j M_j^{-1}=\frac{1}{p^2}\sum_{k,k',l,l'}\frac{\alpha_k}{\alpha_{k'}}\omega^{-(k+1)l-k'l'}\sigma_j\tau_j^{l+l'}=\frac{1}{p}\sum_{k,l=0}^{p-1}\frac{\alpha_{k-1}}{\alpha_{k}}\omega^{-kl}\sigma_j\tau_j^{l}.
\end{equation}
With \eqref{eq:lindbladexpansion} the conjugated local Hamiltonian then becomes 
\begin{align}
\tilde{H}_{j,j+1}
&=\frac1{p^2}\sum_{k,k',l,l'}\frac{\alpha_{k-1}^*}{\alpha_{k}^* }\frac{\alpha_{k'-1}}{\alpha_{k'}}\omega^{kl-k'l'}\left[\left(\tau^{l'-l}_j+\tau^{l'-l}_{j+1}\right)-\left(\tau_j^{-l}\sigma_j\dagg\sigma_{j+1}\tau_{j+1}^{l'}+{\rm h.c.}\right)\right]\nonumber\\
&=-\left(B_j\dagg\sigma_j\dagg\sigma_{j+1}B_{j+1}+{\rm h.c.}\right)
+\sum_{l=0}^{p-1}\gamma_l\left(\tau^{l}_j+\tau^{l}_{j+1}\right),
\label{eq:generalXY}
\end{align}
where 
\begin{equation}
B_j=\sum_{l=0}^{p-1}\beta_l \tau_j^l, \quad \beta_l=\frac{1}{p}\sum_{k=0}^{p-1}\frac{\alpha_{k-1}}{\alpha_{k}}\omega^{-kl}, \quad \gamma_l=\frac{1}{p}\sum_{k=0}^{p-1}
\left|\frac{\alpha_{k-1}}{\alpha_{k}}\right|^2\omega^{-kl}.
\end{equation}
Admittedly this form is not yet very insightful. Thus in the following sections we will consider specific cases for which the Hamiltonian simplifies.

\subsection{$\mathbb{Z}_p$-XY model: most general real coefficients}\label{sec:ZpXYrealmostgen}
One simplification occurs with the requirement that the coefficients $\beta_l$ and $\gamma_l$ are real. For odd $p$ this implies the following conditions (we set $\alpha_0=r_0=1$ due to the freedom in the overall scaling of $m_j$)
\begin{equation}\label{eq:Z2qp1XYmostgen}
\alpha_k=\begin{cases}
e^{\text{i}\theta_k}r_k, & k = 1,\ldots,\frac{p-1}2, \\
e^{\text{i}\theta_{p-k-1}}\frac{r_{(p-1)/2}^2}{r_{p-k-1}}, &k=\frac{p+1}2,\ldots,p-1,\\
\end{cases}
\end{equation}
for $r_1,\ldots,r_{(p-1)/2}>0$ and $\theta_1,\ldots,\theta_{(p-1)/2}\in[0,2\pi)$. Similarly, for $p$ even $\beta_l$ and $\gamma_l$ are real provided 
\begin{equation}
\alpha_k=\begin{cases}
e^{\text{i}\theta_k}r_k, & k = 1,\ldots,\frac{p}2-1, \\
\pm e^{\text{i}\theta_{p-k-1}}\frac{s}{r_{p-k-1}}, &k=\frac{p}2,\ldots,p-1,\\
\end{cases}
\end{equation}
for $r_1,\ldots,r_{p/2-1},s>0 $ and $\theta_1,\ldots,\theta_{p/2-1}\in[0,2\pi)$. 

\subsection{$\mathbb{Z}_p$-XY model: compact form with real coefficients}\label{sec:ZpXYreal}
In order to obtain a compact form for the Hamiltonian \eqref{eq:generalXY} the results from the previous section can be further specified. Taking $\alpha_k=r^k$ with $r\in\mathbb{R}\backslash\{0\}$ such that the ratio between consecutive $\alpha_k$ is constant, we obtain for the coefficients
\begin{equation}
\beta_l=\frac{1}{pr}\left(r^{p}+p\delta_{l,0}-1\right),\quad
\gamma_l=\frac{1}{pr^2}\left(r^{2p}+p\delta_{l,0}-1\right).
\end{equation}
Thus the local Hamiltonian simplifies to
\begin{equation}\label{eq:ZpXY}
\tilde{H}_{j,j+1}=\epsilon-\left[\left(1+b\sum_{l=1}^{p-1}\tau_{j}^l\right)\sigma_j\dagg\sigma_{j+1}\left(1+b\sum_{l=1}^{p-1}\tau_{j+1}^l\right)+{\rm h.c.}\right]-\frac{f}2\sum_{l=1}^{p-1}\left(\tau_{j}^l+\tau_{j+1}^l\right),
\end{equation}
with 
\begin{equation}
b=\frac{r^p-1}{r^p+p-1},\quad
f=\frac{2p(1-r^{2p})}{(r^p+p-1)^2},\quad
\epsilon=\frac{p(r^{2p}+p-1)}{(r^p+p-1)^2},
\end{equation}
where we have done a multiplicative rescaling to set the coupling of $\sigma_{j}\dagg\sigma_{j+1}$ to $-1$. For $p=2$ the model simplifies to the XY model \eqref{eq:XY} discussed in Section~\ref{sec:XY}.

We note that for odd $p$ the model parameters depend on the sign of $r$, while for even $p$ the coefficients only contain even powers of $r$. The latter suggests that there are two sets of ground states for the same Hamiltonian,
\begin{equation}
\ket{\tilde{\Psi}_i^+}=M(r)\ket{\Psi_i},\quad\ket{\tilde{\Psi}_i^-}=M(-r)\ket{\Psi_i}.
\end{equation}
However, from the expansion we recognise
\begin{equation}
\ket{\tilde{\Psi}_i^-}=\ket{\tilde{\Psi}_{i+p/2}^+},
\end{equation}
so both $M(r)$ and $M(-r)$ provide the same set of ground states. Moreover, the physical properties do not change under $r\rightarrow 1/r$. We can see this from $m_j=\Diag(1,r,\ldots,r^n)\rightarrow\Diag(1,r^{-1},\ldots,r^{-n})\propto\Diag(r^n,r^{n-1},\ldots,1)$. The latter is related to the original $m_j$ by a conjugation and cyclic rotation of the basis, hence the physical properties remain invariant.

Finally we note that for $p=3$ and $r>0$ we reproduce the model introduced by Iemini et al.~\cite{Iemini-17}. There the authors also derive the positive-definite form \eqref{eq:deformedham} by the use of Fock parafermions~\cite{CobaneraOrtiz14}. Using elementary methods, in Appendix~\ref{sec:Z3gap} we show that the model possesses a finite energy gap for $0.5695\lesssim r\lesssim 1/0.5695\approx 1.7560$, thus confirming the corresponding numerical results~\cite{Iemini-17}. We note that our proof  does not exclude the existence of an energy gap outside this interval, which can be extended by improving our analysis or using alternative methods~\cite{Fannes-92,Nachtergaele96,LemmMozgunov19}. We note, however, that special care has to be taken regarding the treatment of the boundary conditions.

\subsection{$\mathbb{Z}_3$-XY model: real coefficients from complex deformation}\label{sec:Z3XY}
Our construction allows us to directly generalise the model discussed above. From Section~\ref{sec:ZpXYrealmostgen} we see that for $p=3$ there is an additional freedom in the choice of $m_j$ in the form of a complex phase, ie, we can choose
\begin{equation}\label{eq:Z3XYparameters}
m_j=\begin{pmatrix}1&&\\&e^{\text{i}\theta}r&\\&&r^2\end{pmatrix},
\end{equation}
which results in
\begin{equation}
\tilde{H}_{j,j+1}=\epsilon-\left[\left(1+b^+\tau_j+b^-\tau_j\dagg\right)\sigma_j\dagg\sigma_{j+1}\left(1+b^-\tau_{j+1}+b^+\tau_{j+1}\dagg\right)+\frac{f}2\left(\tau_{j}+\tau_{j+1}\right)+{\rm h.c.}\right]\label{eq:Z3XY}
\end{equation}
with the parameters
\begin{equation}
f=\frac{6(1-r^6)}{(r^3+2\cos\theta)^2},\quad 
b^\pm=\frac{r^3-\cos\theta\pm\sqrt{3}\sin\theta}{r^3+2\cos\theta},\quad \epsilon=\frac{6(r^6+2)}{(r^3+2 \cos\theta)^2}.
\label{eq:XYgeneral3parameters}
\end{equation}
For $\theta=0$ we recover the model studied in Reference~\cite{Iemini-17}. We note that the parameters \eqref{eq:XYgeneral3parameters} possess a divergence at $r=\sqrt[3]{-2\cos\theta}$ provided $\theta\in[\frac{\pi}2,\frac{3\pi}2]$. This divergence is an artefact of fixing the prefactor of the $-\sigma_{j}\dagg\sigma_{j+1}$-term to unity, it can be removed by rescaling the Hamiltonian by $(r^3+2\cos\theta)^2$.

\section{Frustration-free $\mathbb{Z}_{p}$-generalisations of the ANNNI model}\label{sec:ZpANNNC}
In this section we construct $\mathbb{Z}_p$-invariant generalisations of the ANNNI model (see Section~\ref{sec:ANNNI}), for which we will use the term\footnote{Alternatively, since the models will be written in terms of the clock operators, we could use the term axial next-nearest neighbour clock (ANNNC) model~\cite{Milsted-14}.} axial next-nearest neighbour Potts (ANNNP) model~\cite{PeschelTruong86}. More specifically we consider $\mathbb{Z}_p$-invariant Hamiltonians where besides the classical Potts term $\sigma_{j}\sigma_{j+1}\dagg+\sigma_j^\dagger\sigma_{j+1}$ only terms of the form $\tau_j^l\tau_{j+1}^{l'}$ with $l,l'=0,\ldots,p-1$ appear. In particular, there are no terms containing products of $\sigma$- and $\tau$-operators.

First we will derive some general results following from this simple set of rules. Then we discuss several specific examples. We take $m_j$ to be defined by \eqref{eq:Mexpanded} and $L_{j,j+1}=\sigma_j-\sigma_{j+1}$ as before. Furthermore, generalising \eqref{eq:CANNNI} we set $C_{j,j+1}=K_jK_{j+1}$, where $K_j$ acts non-trivially at lattice site $j$ with the matrix $k_j$. Now making the ansatz (in the $\tau$-basis)
\begin{equation}\label{eq:cjZpANNNC}
k_j=\Diag\left(\frac{\alpha_1}{\alpha_0},\frac{\alpha_2}{\alpha_1},\ldots,\frac{\alpha_{p-1}}{\alpha_{p-2}},\frac{\alpha_0}{\alpha_{p-1}}\right),
\end{equation}
and recalling that $C_{j,j+1}$ has to be hermitian and positive definite, we deduce that the $\alpha_k$ have to be real and positive (we set $\alpha_0=1$). From the form above we also deduce that the following identity holds, $K_jM_j\sigma_jM_j^{-1}=\sigma_{j}$. Hence we find for the deformed local Hamiltonian
\begin{align}
\tilde{H}_{j,j+1}&=\tilde{L}_{j,j+1}\dagg K_jK_{j+1}\tilde{L}_{j,j+1}=\tilde{L}_{j,j+1}\dagg\bigl(\sigma_jK_{j+1}-K_j\sigma_{j+1}\bigr)\nonumber\\
&=\left(M_j^{-1}\sigma_{j}\dagg M_j-M_{j+1}^{-1}\sigma_{j+1}\dagg M_{j+1}\right)\bigl(\sigma_jK_{j+1}-K_j\sigma_{j+1}\bigr)\nonumber\\
&=-\bigl(\sigma_j\sigma_{j+1}\dagg+\sigma_j\dagg\sigma_{j+1}\bigr)+\left(M_j^{-1}\sigma_{j}\dagg M_j\sigma_jK_{j+1}+K_jM_{j+1}^{-1}\sigma_{j+1}\dagg M_{j+1}\sigma_{j+1}\right).
\label{eq:hzpANNNCraw}
\end{align}
Here the first two terms represent the classical Potts model. Note that both $M_j$ and $K_j$ are diagonal in the $\tau$-basis and can therefore be expanded in powers of $\tau_{j}$. 
\begin{equation}\label{eq:MsigmaMsigmaKexp}
	M_j^{-1}\sigma_{j}\dagg M_j\sigma_j=\sum_l\Delta_l\tau_j^l,\quad K_j=\sum_l\Gamma_l\tau_j^l,
\end{equation}
where we introduced the abbreviations
\begin{equation}\label{eq:DeltaGamma}
	\Delta_l=\frac1{p}\sum_k\frac{\alpha_{k-1}}{\alpha_k}\omega^{-kl},\quad \Gamma_l=\frac1{p}\sum_k\frac{\alpha_{k+1}}{\alpha_k}\omega^{-kl}.
\end{equation}
Hence, the last two terms in \eqref{eq:hzpANNNCraw} only produce contributions of the form $\tau_{j}^l\tau_{j+1}^{l'}$, as was intended. We will not write down the explicit expansion, since it is tedious and not insightful. Instead, in the next sections we will discuss several explicit examples.  Doing so we obtain a general complex $\mathbb{Z}_3$-ANNNP model. Furthermore, we rediscover the known frustration-free line~\cite{PeschelTruong86,MahyaehArdonne18} in the $\mathbb{Z}_3$ case, with purely real coefficients. Finally, we discuss a frustration-free line for $p=2q$ even, of which the original ANNNI model \eqref{eq:ANNNI} is the simplest representative and $\mathbb{Z}_4$- and $\mathbb{Z}_6$-ANNNP examples are given below.

\subsection{$\mathbb{Z}_3$-ANNNP model: with complex coefficients}\label{sec:Z3ANNNCgeneral}
The simplest non-trivial example (besides ANNNI) we can derive with this construction is the $\mathbb{Z}_3$-ANNNP. The most general deformation for $\mathbb{Z}_3$ is
\begin{equation}\label{eq:genmjZ3}
	m_j=\begin{pmatrix}
		1&&\\ &r& \\ &&s
	\end{pmatrix},
\end{equation}
with the corresponding $C_{j,j+1}$ determined by \eqref{eq:cjZpANNNC}. The deformed Hamiltonian takes the simple form
\begin{equation}
	\label{eq:Z3ANNNC}
	\tilde{H}=-\sum_j\left[\sigma_{j}\sigma_{j+1}\dagg+\frac{f}2(\tau_{j}+\tau_{j+1})+g_1\tau_{j}\tau_{j+1}+g_2\tau_{j}\tau_{j+1}\dagg+{\rm h.c.}\right]+\epsilon,
\end{equation}
which is also the quantum limit of the axial next-nearest neighbour Potts model~\cite{PeschelTruong86}. Since the operators $\sigma_j$ and $\tau_j$ are not self-adjoint, more terms and coefficients than in the original ANNNI model \eqref{eq:ANNNI} appear. The similarity with the ANNNI model is exemplified by the following identifications:

\begin{table}[h]
	\centering
	\begin{tabular}{rcl}
		ANNNI model&&$\mathbb{Z}_3$-ANNNP model\\
		\hline
		$\sigma^x_j\sigma^x_{j+1}$	&\phantom{O}$\rightarrow$\phantom{O}&	$\sigma_j\sigma_{j+1}\dagg$\\
		$\sigma^z_j$				&\phantom{O}$\rightarrow$\phantom{O}&	$\tau_{j}$\\
		$\sigma^z_j\sigma^z_{j+1}$	&\phantom{O}$\rightarrow$\phantom{O}&	$\tau_{j}\tau_{j+1}, \ \tau_{j}\tau_{j+1}\dagg$
	\end{tabular}
\end{table}
\noindent
The coefficients generated by the deformation \eqref{eq:genmjZ3} are in general complex
\begin{align}
	f&=\frac29\left[2\left(r s+\omega \frac{r}{s^2}+\omega^*\frac{s}{r^2} \right)-\left(\frac1{rs}+\omega^*\frac{r^2}{s}+\omega \frac{s^2}{r} \right)\right],\\
	g_1&=-\frac29\left[\omega\left(\frac{r^2}{s}+\frac{s}{r^2}\right)+\omega^*\left(\frac{s^2}{r}+\frac{r}{s^2}\right)	+rs+\frac1{rs}\right],\\
	g_2&= \frac19\left[3+ \left(\frac1{r s} + \frac{s^2}{r}+\frac{r^2}{s}\right) - 2 \left(rs+\frac{s}{r^2} + \frac{r}{s^2}\right) \right].	
\end{align}
Even though there is some elegance in the generality of this model, these complex coefficients are not very practical. Therefore in the next sub-sections we discuss two specific cases.

\subsubsection{$\mathbb{Z}_3$-ANNNP model: real coefficients reproducing Reference~\cite{MahyaehArdonne18}}\label{sec:Z3ANNNC}
The first example features purely real coefficients. This model was originally obtained by direct calculation by Mahyaeh and Ardonne~\cite{MahyaehArdonne18}. We rediscover it by considering the deformation \eqref{eq:genmjZ3} with $s=r$, ie, 
\begin{equation}
m_j=\begin{pmatrix}
1&&\\&r&\\&&r
\end{pmatrix}, \quad L_{j,j+1}=\sigma_{j}-\sigma_{j+1},\quad C_{j,j+1}=K_j K_{j+1},\quad
k_j=\begin{pmatrix}
r&&\\&1&\\&&r^{-1}
\end{pmatrix},
\end{equation}
such that the coefficients become
\begin{align}
	&f=\frac{2(1+2r)(1-r^3)}{9r^2},\quad \epsilon=\frac{2(1+r+r^2)^2}{9r^2},\\
	&g_1=-\frac{2(1-r)^2(1+r+r^2)}{9r^2},\quad g_2=\frac{(1-r)^2(1-2r-2r^2)}{9r^2}.
\end{align}
The exact ground states originally constructed in Reference~\cite{MahyaehArdonne18} follow by direct application of Theorem~\ref{thm:Witten}. Furthermore, in Appendix~\ref{sec:Z3gap} we prove that the model possesses an energy gap above these ground states at least in the interval $\sqrt{\frac32\sqrt{2}-2}\approx 0.3483\lesssim r\lesssim 3.9912$. Finally we note that for $r=(\sqrt{3}-1)/2\approx0.366$ the model \eqref{eq:Z3ANNNC} simplifies as the parameter $g_2$ vanishes.

\subsubsection{$\mathbb{Z}_3$-ANNNP with ground state deformation of $\mathbb{Z}_3$-XY model}\label{sec:Z3ANNNC_XY}
For the second example we consider the deformation that we encountered before for $\mathbb{Z}_3$-XY, namely $s=r^2$,
\begin{equation}\label{eq:MahmjZ3}
	m_j=\begin{pmatrix}
		1&&\\ &r& \\ &&r^2
	\end{pmatrix}.
\end{equation}
Thus the deformed ground states are identical to the ones for $\theta=0$ discussed in Section~\ref{sec:Z3XY}. However, due to the non-trivial choice for $C_{j,j+1}$ the Hamiltonian will differ, specifically we obtain \eqref{eq:Z3ANNNC} with the coefficients
\begin{equation}
f=\omega^*\frac{(1-r^3)\left[(1-r^3)+3\sqrt{3}\text{i}(1+r^3)\right]}{9r^3},\quad 
g_1=-2\omega g_2,\quad
g_2=-\frac {(1-r^{3})^2}{9r^3}.
\end{equation}
The coefficient $g_1$ can be chosen to be real via a gauge transformation, ie, a permutation of diagonal elements of $m_j$.

\subsection{$\mathbb{Z}_{p}$-ANNNP model: most general real coefficients}\label{sec:ZqrealANNNC}
For general $\mathbb{Z}_p$ we discuss the case when all coefficients take real values. From \eqref{eq:MsigmaMsigmaKexp} and \eqref{eq:DeltaGamma} we recognise that the coefficient of $\tau_{j}^l\tau_{j+1}^{l'}$ is $\Delta_l\Gamma_{l'}+\Gamma_l\Delta_{l'}$. This is real for example if $\Gamma_{l}^*=\Delta_{l}$, which yields the constraints (recall that $\alpha_k>0$)
\begin{equation}\label{eq:ZpANNNCmostgencoef}
\alpha_{-k}\equiv\alpha_{p-k}=\alpha_{k}
\end{equation}
for all $k$. Thus there are $(p-1)/2$ real degrees of freedom for $p$ odd and $p/2$ for $p$ even. The expansion is still not in a compact form. In Section~\ref{sec:Z2qANNNC} we will discuss a Hamiltonian with a compact form for  $p$ even. For $p$ odd we did not obtain a simple compact form, except for the case $p=3$ discussed in the next section.

The condition~\eqref{eq:ZpANNNCmostgencoef} has another consequence. Under charge conjugation
\begin{equation}\label{eq:chargeconjugation}
\sigma_{j}\rightarrow\sigma_{j}\dagg,\quad\tau_{j}\rightarrow\tau_{j}\dagg,
\end{equation}
we see that 
\begin{equation}
M_j^{-1}\sigma_{j}\dagg M_j\sigma_j\rightarrow K_j, \quad 
K_j\rightarrow M_j^{-1}\sigma_{j}\dagg M_j\sigma_j.
\end{equation}
In this particular case the Hamiltonian~\eqref{eq:hzpANNNCraw} is invariant under charge conjugation, and together with the $\mathbb{Z}_p$-symmetry generated by $\omega^P$, the full symmetry group is the dihedral group D$_p$~\cite{Rotmann95,Moran-17}. Note that for $p=3$ the dihedral group is isomorphic to the symmetric group S$_3$ of all permutations. 

\subsection{$\mathbb{Z}_{2q}$-ANNNP model: compact form with real coefficients}\label{sec:Z2qANNNC}
For even $p=2q$ it is possible to construct a model depending on a single parameter which possesses real coefficients and a simple closed form. We start with 
\begin{equation}
m_j=\Diag(1,r,\ldots,r^{q-1},r^{q},r^{q-1},\ldots,r), \quad 
k_j=\Diag(r,\ldots,r,r^{-1},\ldots,r^{-1}).
\label{eq:ANNNIdeformation}
\end{equation}
Using Equation~\eqref{eq:Mexpanded} we see that 
\begin{eqnarray}
M_j^{-1}\sigma_{j}\dagg M_j\sigma_j&=&\sum_{l=0}^{p-1}\left[r^{-1}+(-1)^lr\right]\theta^{q}(l)\tau_j^l,\label{eq:MsigmaMsigma}\\
K_j&=&\sum_{l=0}^{p-1}(-\omega)^l\left[r^{-1}+(-1)^lr\right]\theta^{q}(l)\tau_j^l,\label{eq:cj}
\end{eqnarray}
where
\begin{equation}
\theta^q(l)=\frac1{2q}\sum_{k=1}^{q}\omega^{-kl}=\begin{cases}
\frac12, &\textup{ if } l=0,\\
0, &\textup{ if } l \textup{ even}\neq0,\\
\frac1{2q}\sum_{k=1}^{q}\omega^{-kl}, &\textup{ if } l \textup{ odd}.
\end{cases}
\end{equation}
Note that both \eqref{eq:MsigmaMsigma} and \eqref{eq:cj} only contribute odd powers of $\tau$ (or the identity), hence the last two terms in \eqref{eq:hzpANNNCraw} can only give odd powers of $\tau_j$-operators. The full expression becomes
\begin{align}
&M_j^{-1}\sigma_{j}\dagg M_j\sigma_jk_{j+1}+k_jM_{j+1}^{-1}\sigma_{j+1}\dagg M_{j+1}\sigma_{j+1}\nonumber\\
&\quad=\sum_{l,l'}\left[(-\omega)^l+(-\omega)^{l'}\right]\left[r^{-1}+(-1)^lr\right]\left[r^{-1}+(-1)^{l'}r\right]\theta^{q}(l)\theta^{q}(l')\tau_j^l\tau_{j+1}^{l'}.
\end{align}
Let us consider the different terms individually. First, the term with $l=l'=0$ turns into an energy shift given by 
\begin{equation}
\epsilon=\frac{(r+r^{-1})^2}2.
\end{equation} 
Second, the terms with $l=0$ or $l'=0$ turn into a magnetic-field term of the form $-\frac{f}2(\tau_j^{l}+\tau_{j+1}^{l})$ for odd $l$, with the prefactor given by
\begin{equation}
f=-(r^{-2}-r^2)(1-\omega^l)\theta^q(l)=\frac{r^2-r^{-2}}{2q}\sum_{k=1}^q\bigl(\omega^{-kl}-\omega^{-(k-1)l}\bigr)
=\frac{r^{-2}-r^2}{q}.
\end{equation}
Finally, the remaining terms with $l,l'\neq 0'$ yield the terms $U_{ll'}\tau_j^l\tau_{j+1}^{l'}$ with
\begin{equation}
\label{eq:Z2qANNNCUlm}
U_{ll'}=-\left(\omega^l+\omega^{l'}\right)\left(r-r^{-1}\right)^2\theta^q(l)\theta^q(l').
\end{equation}
Note that $\left[\omega^l\theta^q(l)\theta^q(l')\right]^*=\omega^{l'}\theta^q(l)\theta^q(l')$ and therefore $U_{ll'}^*=U_{ll'}=U_{l'l}$, such that the full local Hamiltonian becomes
\begin{equation}
\tilde{H}_{j,j+1}=-\bigl(\sigma_{j}\sigma_{j+1}\dagg+\sigma_{j}\dagg\sigma_{j+1}\bigr)-\frac{f}2\sum_{\substack{l=1\\ l\,\text{odd}}}^{p-1}\left(\tau_{j}^{l}+\tau_{j+1}^{l}\right)+\sum_{\substack{l,l'=1 \\ l,l'\,\text{odd}}}^{p-1}U_{ll'}\tau_{j}^{l}\tau_{j+1}^{l'}+\epsilon.
\end{equation}
We note that the Hamiltonian for even $p$ is invariant under $r\rightarrow 1/r$ and $\tau\rightarrow-\tau$. 

\subsection{$\mathbb{Z}_4$-ANNNP model}\label{sec:Z4ANNNC}
The first new non-trivial example originating from the construction of the previous section is obtained for $p=4$. In this case the local Hamiltonian becomes remarkably simple
\begin{equation}
\tilde{H}_{j,j+1}=-\left[\sigma_{j}\sigma_{j+1}\dagg+\frac{f}2(\tau_{j}+\tau_{j+1})-U\tau_{j}\tau_{j+1}+{\rm h.c.} \right]+\epsilon\label{eq:ANNNCZ4}
\end{equation}
with the parameters
\begin{equation}
f=\frac{r^{-2}-r^2}{2}, \quad U=\frac{(r-r^{-1})^2}{4}, \quad \epsilon=\frac{ (r + r^{-1} )^2}{2}.
\label{eq:Z4parameters}
\end{equation}
Note the absence of terms like $\tau_{j}^2$, $\tau_j\tau_{j+1}^2$ and $\tau_{j}\tau_{j+1}\dagg$, in contrast to the frustration-free $\mathbb{Z}_3$-ANNNP model \eqref{eq:Z3ANNNC}. The correlation functions in the four-fold degenerate ground states $\ket{\tilde{\Psi}_i}$ are identical to the ones in the ANNNI model, see Equation~\eqref{eq:correlatorZ2}, 
\begin{equation}\label{eq:correlatorZ4}
\left|\frac{\braket{\tilde{\Psi}_{i}|\sigma_j\sigma_{j'}^\dagger|\tilde{\Psi}_i}}{\braket{\tilde{\Psi}_i|\tilde{\Psi}_i}}\right|=\frac{4}{(r+r^{-1})^2},
\end{equation}
In Appendix~\ref{sec:Z4ANNNCgap} we prove that the model \eqref{eq:ANNNCZ4} possesses an energy gap $\tilde{\Delta}$ above the ground states. More specifically, we show that the lower bound for the gap in the thermodynamic limit is given by $\frac{4\min(r^2,r^{-2})}{r^2+r^{-2}}\le\tilde{\Delta}$. For completeness in Figure~\ref{fig:z4_thermogap} we compare this to numerical results for the energy gap. The latter were obtained by extrapolating finite-size data from system sizes $L=64,76,88,100$ to $L\to\infty$, with the finite-size results being calculated by employing the density matrix renormalisation group (DMRG) method~\cite{White92,Schollwoeck11} using the TeNPy~\cite{HauschildPollmann18} library.
\begin{figure}[t]
	\centering
	\includegraphics[width=0.5\textwidth]{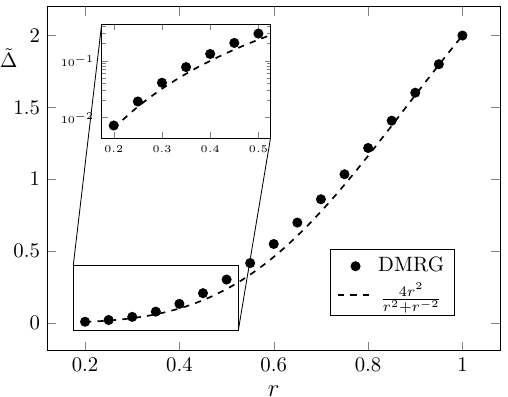}
	\caption{Energy gap $\tilde{\Delta}$ above the four-fold degenerate ground states $\ket{\tilde{\Psi}_i}$ of the frustration-free ANNNP model \eqref{eq:ANNNCZ4}. The dots show the energy gap obtained by extrapolating the finite-size data for $L=64,76,88,100$ to the thermodynamic limit. The dashed line is the lower bound for the energy gap proven to exist in Appendix~\ref{sec:Z4ANNNCgap}. Inset: Zoom in on the small-$r$ region, logarithmic scale.}
	\label{fig:z4_thermogap}
\end{figure}

Closer inspection of the parameters \eqref{eq:Z4parameters} shows that they satisfy the relation $f=2\sqrt{U(1+U)}$, which is identical to the relation along the Peschel--Emery line in the ANNNI model. This points towards a closer relation between the models \eqref{eq:ANNNCZ4} and \eqref{eq:ANNNI}, which we discuss in the following. In fact, even away from the frustration-free line one can map the $\mathbb{Z}_4$-ANNNP chain to two decoupled ANNNI chains. For simplicity we consider an infinitely long system (ie, we ignore the boundary conditions) and drop the constant energy shift $\epsilon$; thus \eqref{eq:ANNNCZ4} turns into the Hamiltonian
\begin{equation}\label{eq:ANNNCZ4_2}
H_{\rm ANNNP}=-\sum_j \left(\sigma_{j}\sigma_{j+1}\dagg+f\tau_{j}-U\tau_{j}\tau_{j+1}+{\rm h.c.} \right).
\end{equation}
Introducing the dual operators via 
\begin{equation}
\sigma_{j}\dagg\sigma_{j+1}\rightarrow\tilde{\tau}_j,\quad \tau_j\rightarrow\tilde{\sigma}_{j-1}\tilde{\sigma}_{j}\dagg,
\end{equation}
which satisfy the clock algebra \eqref{eq:clockalgebra} with $p=4$, we can rewrite this as 
\begin{equation}\label{eq:Z4ANNNCdual}
H_{\rm ANNNP}^{\rm dual}=-\sum_j \left(\tilde{\tau}_j+f\tilde{\sigma}_{j}\tilde{\sigma}_{j+1}\dagg-U\tilde{\sigma}_{j}\tilde{\sigma}_{j+2}\dagg+{\rm h.c.} \right).
\end{equation}
Next we introduce two sets of Pauli matrices $\sigma_{i,j}^{x/z}$, $i=1,2$, per lattice site $j$, and consider the mapping~\cite{Ortiz-12,Moran-17,Chew-18} 
\begin{equation}
\tilde{\sigma}_j=e^{\text{i}\frac{\pi}{4}}\left(\frac{\sigma_{1,j}^x-\text{i}\sigma_{2,j}^x}{\sqrt{2}}\right), \quad 
\tilde{\tau}_j+\tilde{\tau}_j\dagg= \sigma_{1,j}^z+\sigma_{2,j}^z.
\label{eq:mapZ4toANNNI}
\end{equation}
From the second relation in \eqref{eq:mapZ4toANNNI} we can already infer that the $\tilde{\tau}_j$-terms are mapped to a transverse magnetic field on the Ising ladder. For the other terms, we use the following simple identity
\begin{align}
\tilde{\sigma}_j\tilde{\sigma}_{j+j'}\dagg+{\rm h.c.}=\sigma_{1,j}^x\sigma_{1,j+j'}^x+\sigma_{2,j}^x\sigma_{2,j+j'}^x.
\end{align}
Thus the dual of the $\mathbb{Z}_4$-ANNNP model can be written as the sum of two decoupled ANNNI chains
\begin{equation}\label{eq:Hmap}
H_{\rm ANNNP}^{\rm dual}=H_{{\rm ANNNI},1}^{\rm dual}+H_{{\rm ANNNI},2}^{\rm dual}
\end{equation}
with 
\begin{equation}\label{eq:HANNNIdual}
H_{{\rm ANNNI},i}^{\rm dual}=-\sum_j \bigl(\sigma_{i,j}^z+f\sigma_{i,j}^x\sigma_{i,j+1}^x-U\sigma_{i,j}^x\sigma_{i,j+2}^x \bigr).
\end{equation}
Performing another duality transformation \eqref{eq:HANNNIdual} can be brought into the form \eqref{eq:ANNNI} discussed in Section~\ref{sec:ANNNI}. The condition for the parameters $f$ and $U$ to be on the frustration-free line directly turns into the Peschel--Emery line for the two ANNNI models. 

\subsection{$\mathbb{Z}_6$-ANNNP model}\label{sec:Z6ANNNC}
Interestingly, in the case $p=6$ the deformation \eqref{eq:ANNNIdeformation} leads to another rather simple model with the local Hamiltonian 
\begin{align}
\tilde{H}_{j,j+1}=&-\left[\sigma_{j}\sigma_{j+1}\dagg+\frac{f}2\left(\tau_{j}+\frac12\tau_{j}^3+\tau_{j+1}+\frac12\tau_{j+1}^3\right)\right.\nonumber\\*
&\left.\quad-U\left(\tau_{j}\tau_{j+1}+\frac{1}{4}\tau_{j}\tau_{j+1}^3+\frac{1}{4}\tau_{j}^3\tau_{j+1}-\frac{1}{2}\tau_{j}\tau_{j+1}\dagg+\frac18\tau_{j}^3\tau_{j+1}^3\right)+{\rm h.c.}\right] +\epsilon,\qquad\label{eq:ANNNCZ6}
\end{align}
where the parameters are given by
\begin{equation}
f=\frac{r^{-2}-r^2}{3}, \quad U=\frac{2}{9}(r-r^{-1})^2, \quad \epsilon=\frac{ (r + r^{-1} )^2}{2}.
\end{equation}
We note that even though the $\mathbb{Z}_6$-symmetry allows a wealth of terms of the form $\tau_j^l\tau_{j+1}^{l'}$, along the frustration-free line the relative prefactors of them are fixed to fairly simple values. In Figure~\ref{fig:z6_thermogap} we show the energy gap above the six-fold degenerate ground state. The numerical results were obtained by extrapolation from finite-size data, they clearly indicate the existence of a finite energy gap along the frustration-free line. In addition, in Appendix~\ref{sec:Z6ANNNCgap} we prove that the model is gapped at least in the interval $0.5754\lesssim r\lesssim 1/0.5754\approx 1.7379$. We note in passing that using more advanced methods for open boundary conditions~\cite{LemmMozgunov19} it is possible to enlarge the region for which the existence of a finite energy gap can be proven. However, the obtained lower bounds are found to be quite small ($<10^{-5}$).
\begin{figure}[t]
	\centering
	\includegraphics[width=0.5\textwidth]{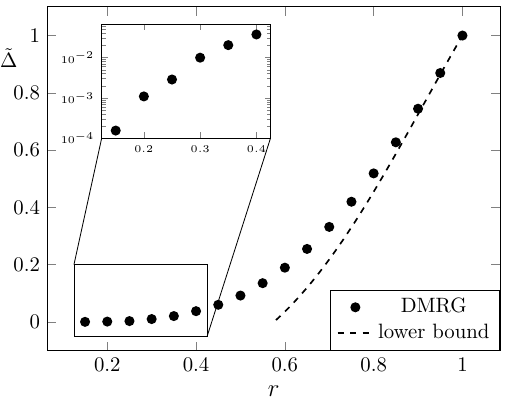}
	\caption{Energy gap $\tilde{\Delta}$ above the six-fold degenerate ground states of the frustration-free ANNNP model \eqref{eq:ANNNCZ6}. The dots show the energy gap obtained by extrapolating the finite-size data for $L=64,76,88,100$ to the thermodynamic limit. The dashed line is the lower bound for the energy gap obtained in Appendix~\ref{sec:Z6ANNNCgap}.}
	\label{fig:z6_thermogap}
\end{figure}

\section{Discussion}
We have presented a constructive approach to understand and derive one-dimensional frustration-free spin models. Starting from a simple point, for example a classical system, we derived the corresponding frustration-free quantum models and their exact ground states. We have shown that many known frustration-free spin-1/2, spin-1 and $\mathbb{Z}_p$-clock models can be understood in this framework on an equal footing. Hence our approach provides an overarching framework for many frustration-free systems.

More specifically, the approach allowed us to connect two distinct frustration-free $\mathbb{Z}_3$-clock models recently introduced by Iemini et al.~\cite{Iemini-17} and Mahyaeh and Ardonne~\cite{MahyaehArdonne18}. As we have shown, both models can be interpreted as different deformations of the classical three-state Potts chain, see Figure~\ref{fig:linkmodels} for an illustration of their relation. As a side remark, we analytically showed that the energy gap remains finite in  a finite region around the classical point for both models. This in particular implies that both models (or their parafermion analogues) are in the same (topological) phase. Furthermore, we have constructed several new frustration-free $\mathbb{Z}_p$-clock models, including $\mathbb{Z}_4$- and $\mathbb{Z}_6$-generalisations of the Peschel--Emery line of the original ANNNI chain. 
\begin{figure}[t]
	\centering
	\includegraphics[width=0.74\textwidth]{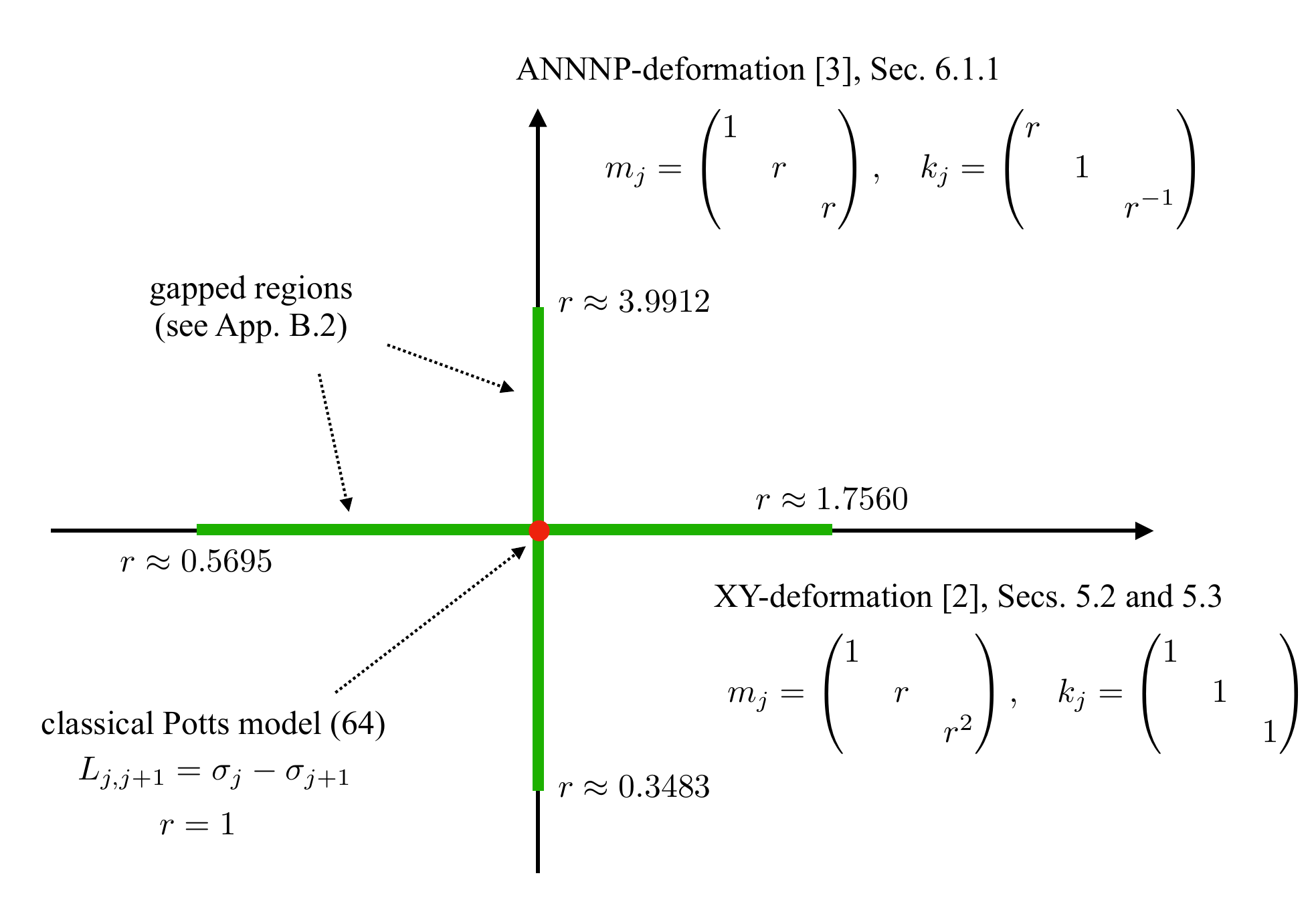}
	\caption{Schematic sketch of the relation between the two frustration-free $\mathbb{Z}_3$-clock models introduced by Iemini et al.~\cite{Iemini-17} (see Sections~\ref{sec:ZpXYreal} and~\ref{sec:Z3XY}) and Mahyaeh and Ardonne~\cite{MahyaehArdonne18} (see Section~\ref{sec:Z3ANNNC}). Both models can be obtained as deformations of the classical three-state Potts chain (red dot) using the local deformations $m_j$ and central term $k_j$ depending on the parameter $r$. The green lines indicate the regions in which the systems are proven to be gapped in Appendix~\ref{sec:Z3gap}. In particular, within this region the two models can be connected without closing the energy gap, implying that they are in the same phase.}
	\label{fig:linkmodels}
\end{figure}

We stress that the list of frustration-free clock models considered above is by no means extensive. On the contrary, the examples discussed here should be regarded as a proof of principle on how to apply the general construction. Several generalisations come to mind: First, one may consider chiral classical models~\cite{Ostlund81,Howes-83} as starting points in the deformation construction. However, since in this case the local Hamiltonians are no longer given by simple projectors, the deformed Hamiltonians so obtained may become quite complicated. Second, in this paper we have kept the considered deformations to be homogeneous, a restriction that is not required by Theorem~\ref{thm:Witten}. Thus our results can be extended to inhomogeneous systems. Third, another generalisation would be to relax the requirement for the operator $C_{j,j+1}$ to be positive definite. In such a case, the ground states of the undeformed model are no longer transformed into ground states of the new model. However, they will still be exact eigenstates, potentially in the middle of the spectrum, and thus may be relevant in the context of quantum many-body scars~\cite{ShiraishiMori17,Moudgalya-18,Moudgalya-18-2,Choi-19prl,Shibata-20,Moudgalya-20}.

\section*{Acknowledgements}
We thank Floris Elzinga, Vladimir Korepin, Marius Lemm and Iman Mahyaeh for useful discussions and correspondence. H. K. was supported in part by JSPS Grant-in-Aid for Scientific Research on Innovative Areas No. JP18H04478 and JP20H04630, and JSPS KAKENHI Grant No. JP18K03445. This work is part of the D-ITP consortium, a program of the Netherlands Organisation for Scientific Research (NWO) that is funded by the Dutch Ministry of Education, Culture and Science (OCW).

\appendix
\section{Witten's conjugation argument}\label{app:Witten}
In this appendix we recall Witten's original conjugation argument on the ground-state degeneracy of supersymmetric Hamiltonians~\cite{Witten82}. Consider two supercharges $Q$ and $Q^\dagger$ as well as a Hamiltonian $H$ satisfying
\begin{equation}
Q^2=(Q^\dagger)^2=0,\quad H=Q^\dagger Q+QQ^\dagger.
\end{equation}
First we note that any zero-energy ground state $\ket{\psi}$ of $H$ is annihilated by both $Q$ and $Q^\dagger$. Furthermore, it is not possible to obtain $\ket{\psi}$ by action of $Q$, ie, $\ket{\psi}\neq Q\ket{\phi}$ for any state $\ket{\phi}$. (To see this assume $\ket{\psi}=Q\ket{\phi}$. But since $\ket{\psi}$ is a zero-energy ground state we have $0=Q^\dagger\ket{\psi}=Q^\dagger Q\ket{\phi}$ which implies $\bra{\phi}Q^\dagger Q\ket{\phi}=\Vert Q\ket{\phi}\Vert^2=0$ and thus $Q\ket{\phi}=0$ in contradiction with the assumption that $\ket{\psi}$ is a ground state.) 
Thus we see that the number of linearly independent zero-energy states of $H$ is the same as the dimension of the quotient space ${\rm Ker}\,Q/{\rm Im}\,Q$. 

Now let us consider the deformed/conjugated operators $\tilde{Q}=MQM^{-1}$, $\tilde{Q}^\dagger=(\tilde{Q})^\dagger$ and $\tilde{H}=\tilde{Q}^\dagger \tilde{Q}+\tilde{Q}\tilde{Q}^\dagger$ with $M$ being invertible. 
Obviously, if $\ket{\psi} \in {\rm Ker}\,Q$, the deformed state $\ket{\tilde{\psi}}=M\ket{\psi}$ is an element of ${\rm Ker}\,\tilde{Q}$. Furthermore, if this $\ket{\psi}$ is not written as $Q\ket{\phi}$ for any $\ket{\phi}$, then $\ket{\tilde{\psi}}$ cannot be written as $\ket{\tilde{\psi}}=\tilde{Q}\ket{\tilde{\phi}}$ for any $\ket{\tilde{\phi}}$, since this would imply that $\ket{\psi}=QM^{-1}\ket{\tilde{\phi}}$ in contradiction with the assumption. 
Conversely, if $\ket{\tilde{\psi}} \in {\rm Ker}\,\tilde{Q}$ is not written as $\ket{\tilde{\psi}} = \tilde{Q} \ket{\tilde{\phi}}$, then $\ket{\psi}=M^{-1} \ket{\tilde{\psi}}$ satisfies $Q\ket{\psi}=0$ but cannot be written in the form $Q\ket{\phi}$. 
Thus $M$ is a one-to-one mapping from 
${\rm Ker}\,Q/{\rm Im}\,Q$ to ${\rm Ker}\,\tilde{Q}/{\rm Im}\,\tilde{Q}$, leading to the conclusion that $H$ and $\tilde{H}$ have the same number of zero-energy states. 
It should be noted that this does not imply that the mapping $M$ establishes a one-to-one correspondence between the ground-state manifolds of $H$ and $\tilde{H}$ 
since one can freely add a state $\tilde{Q} \ket{s}$ to a representative of an equivalence class of ${\rm Ker}\,\tilde{Q}/{\rm Im}\,\tilde{Q}$ and think of the new state as a representative of the same equivalence class.

\section{Energy gap of some $\mathbb{Z}_3$-,  $\mathbb{Z}_4$- and $\mathbb{Z}_6$-models}\label{sec:gap}
The conjugation argument does only provide information about the ground-state manifold. In order to obtain information about the energy gap above it, additional techniques have to be employed. In Appendix~\ref{app:lemmknabe} we recall Knabe's method~\cite{Knabe88}, which was originally applied to the AKLT model with periodic boundary conditions. This is then applied in Appendices~\ref{sec:Z3gap}, \ref{sec:Z4ANNNCgap} and~\ref{sec:Z6ANNNCgap} to prove the existence of an energy gap in specific $\mathbb{Z}_3$-, $\mathbb{Z}_4$-, and $\mathbb{Z}_6$-models.

\subsection{Knabe's method}\label{app:lemmknabe}
We consider a system with $N$ sites, open boundary conditions and the Hamiltonian
\begin{equation}\label{eq:Knabeham}
H_N=\sum_{j=1}^{N-1}P_{j,j+1},
\end{equation}
with the $P_{j,j+1}$ being two-site projection operators. We assume $\bigcap_j\ker(P_{j,j+1})\neq\{0\}$, ie, the ground state is at zero energy, and denote the energy gap of $H_N$ by $\Delta_N$. Then we have
\begin{theorem}[Knabe's method~\cite{Knabe88}]\label{thm:Knabe}
For the projector Hamiltonian $H_N$ the gap above the ground state ($\Delta_N$) is bounded from below by 
\begin{equation}\label{eq:lemmm}
\Delta_N\ge \frac{m-1}{m-2}\left(\min_{m'=2,\ldots,m}\left\{\Delta_{m'}\right\}-\frac1{m-1}\right),
\end{equation}
where $\Delta_{m'}$ denotes the gap of the $m'$-site, sub-system Hamiltonian
\begin{equation}\label{eq:Knabemham}
h_{j,m'}=\sum_{k=j}^{j+m'-2}P_{k,k+1}.
\end{equation}
\end{theorem}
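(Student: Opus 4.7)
The plan is to prove the operator inequality $H_N^2 \ge \lambda H_N$ with $\lambda = \frac{m-1}{m-2}\bigl(\gamma - \frac{1}{m-1}\bigr)$ and $\gamma = \min_{m'=2,\ldots,m}\Delta_{m'}$, from which the spectral-gap bound $\Delta_N \ge \lambda$ is immediate: on the kernel both sides vanish, while on its orthogonal complement $H_N \ge \Delta_N$ forces $H_N^2 \ge \Delta_N H_N$, and conversely a positive eigenvalue $E$ satisfies $E^2 \ge \lambda E$, hence $E \ge \lambda$. To this end I will sandwich a sum $S = \sum_\alpha h_\alpha^2$, taken over a carefully chosen family of sub-system windows $\{h_\alpha\}$, between a lower bound proportional to $\gamma H_N$ and an upper bound involving $H_N^2$ and $H_N$.

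First I assemble the window family: the $N-m+1$ bulk windows $h_{j,m}$ for $j=1,\ldots,N-m+1$, together with the shorter boundary windows $h_{1,m'}$ and $h_{N-m'+1,m'}$ for $m'=2,\ldots,m-1$ at the two ends. A direct count verifies that each projector $P_k$ is contained in exactly $m-1$ members of this combined family—the boundary windows precisely compensate for the reduced bulk multiplicity near the ends—so $\sum_\alpha h_\alpha = (m-1)\,H_N$. Each $h_\alpha$ is positive semi-definite with spectral gap at least $\Delta_{m_\alpha}\ge\gamma$ above its kernel, giving $h_\alpha^2 \ge \gamma\, h_\alpha$; summing yields the lower bound $S \ge \gamma(m-1) H_N$.

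For the upper bound I expand $h_\alpha^2 = \sum_{k,k' \in \alpha} P_k P_{k'}$ and write $S = \sum_{k,k'}\mu_{k,k'} P_k P_{k'}$, where $\mu_{k,k'}$ counts the windows containing both projectors. The same counting gives $\mu_{k,k} = m-1$ and $\mu_{k,k'} = \max(0,\,m-1-|k-k'|) \le m-2$ for $k \ne k'$, so using $P_k^2 = P_k$ the difference $(m-2) H_N^2 + H_N - S = \sum_{k\ne k'}\bigl[(m-2) - \mu_{k,k'}\bigr] P_k P_{k'}$ has non-negative scalar coefficients that vanish at $|k-k'|=1$ and are strictly positive for $|k-k'|\ge 2$. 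The key algebraic observation is that when $|k-k'|\ge 2$ the two projectors act on disjoint pairs of sites and therefore commute, so $P_k P_{k'}$ is itself an orthogonal projector and in particular positive semi-definite. Hence $(m-2) H_N^2 + H_N \ge S \ge \gamma(m-1) H_N$, which rearranges to $(m-2) H_N^2 \ge \bigl[(m-1)\gamma - 1\bigr] H_N$ and, via the first paragraph, to the desired bound.

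The main obstacle is the combinatorial balancing underlying both the first and the third step: the window family must cover every single projector exactly $m-1$ times while keeping every pair count bounded by $m-2$. The upper bound on pair counts is saturated precisely at $|k-k'|=1$—the one spacing for which $P_k P_{k'}$ is not manifestly positive—so even a single extra nearest-neighbour coincidence in the family would spoil the upper bound, and the inclusion of $\min_{m'}\Delta_{m'}$ (rather than $\Delta_m$ alone) in the statement is exactly what allows the smaller boundary windows to be used without weakening the lower bound. Verifying that the proposed boundary windows achieve both constraints simultaneously requires a small but careful case analysis of pairs straddling the bulk–boundary transition; once it is in place, the remainder is a straightforward algebraic rearrangement.
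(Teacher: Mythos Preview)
Your proof is correct and follows essentially the same approach as the paper. Both arguments use the identical window family (bulk $m$-site windows together with boundary windows of sizes $2,\ldots,m-1$), the identity $\sum_\alpha h_\alpha = (m-1)H_N$, the spectral inequality $h_\alpha^2 \ge \Delta_{m_\alpha} h_\alpha$, and the positivity of $P_kP_{k'}$ for $|k-k'|\ge 2$; the paper organises the computation by starting from $H_N^2$ and peeling off positive pieces, whereas you introduce $S=\sum_\alpha h_\alpha^2$ and sandwich it, but unwinding either presentation gives exactly the same operator inequality $(m-2)H_N^2 + H_N \ge S \ge (m-1)\gamma\,H_N$.
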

\begin{proof}
Note that $H_N$ is positive semi-definite, therefore $H_N^2\ge \Delta_N H_N$. In other words, if we obtain the above inequality with $\Delta_N$, the Theorem is proven. We have the analogous statement for $h_{j,m'}$, $h_{j,m'}^2\ge \Delta_m h_{j,m'}$, and moreover realise that $P_{j,j+1}^2=P_{j,j+1}$ and $[P_{j,j+1},P_{k,k+1}]=0$ for $|j-k|>1$.

To prove the bound, we first expand $H_N^2$
\begin{align}
	H_N^2&=\sum_{j=1}^{N-1}h_{j,2}^2+\sum_{m'=1}^{m-2}\sum_{j=1}^{N-m'-1}\left(h_{j,2}h_{j+m',2}+\rm{h.c.}\right)+\sum_{|j-k|>m-2}h_{j,2}h_{k,2}\\
  		&\ge H_N+\sum_{m'=1}^{m-2}\frac{m-m'-1}{m-2}\sum_{j=1}^{N-m'-1}\left(h_{j,2}h_{j+m',2}+\rm{h.c.}\right),
\end{align}
with the second step following from the fact that $h_{j,2}h_{k,2}$ is positive semi-definite for $|k-j|>1$. This can be further reduced to 
\begin{align}
	H_N^2	& \ge H_N + \frac1{m-2}\left[\sum_{j=1}^{N-m+1}h_{j,m}^2+\sum_{m'=2}^{m-1}\left(h_{1,m'}^2+h_{N-m'+1,m'}^2\right)-(m-1)H_N\right]\\
			& \ge \left(1-\frac{m-1}{m-2}\right)H_N + \frac1{m-2}\left[\Delta_{m}\sum_{j=1}^{N-m+1}h_{j,m}+\sum_{m'=2}^{m-1}\Delta_{m'}\left(h_{1,m'}+h_{N-m'+1,m'}\right)\right].\label{eq:knabestep}
\end{align}
Because we have the expansion
\begin{equation}
	H_N=\frac1{m-1}\left[\sum_{j=1}^{N-m+1}h_{j,m}+\sum_{m'=2}^{m-1}\left(h_{1,m'}+h_{N-m'+1,m'}\right)\right]
	\end{equation}
	and $\Delta_{m'}\le1$, the last term of \eqref{eq:knabestep} can be simplified to obtain
	\begin{align}
		H_N^2	& \ge -\frac1{m-2}H_N + \frac{m-1}{m-2}\min_{m'=2,\ldots,m}\left\{\Delta_{m'}\right\}H_N\\
		&=\frac{m-1}{m-2}\left(\min_{m'=2,\ldots,m}\left\{\Delta_{m'}\right\}-\frac1{m-1}\right)H_N.
	\end{align}
This proves the Theorem with the lower bound $\Delta_N\ge \frac{m-1}{m-2}\left(\min_{m'=2,\ldots,m}\left\{\Delta_{m'}\right\}-\frac1{m-1}\right)$.
\end{proof}

\begin{remark}
Given that the models considered here can be viewed as parent Hamiltonians for matrix product ground states, one can apply more powerful tools~\cite{Fannes-92,Nachtergaele96,LemmMozgunov19} to prove the existence of energy gaps.\footnote{For example, for a so-called injective matrix product state one can prove that the corresponding parent Hamiltonian has a unique ground state and a finite energy gap~\cite{Fannes-92,Nachtergaele96,Perez-Garcia-07}. However, most of the ground states we have looked at in this article do not qualify as injective (see, eg, Reference~\cite{JevticBarnett17} for the ANNNI model), because of the degeneracy.} These methods allow one to extend the parameter regions with proven energy gaps. However, in some cases, Knabe's method gives us a better lower bound for an energy gap for fixed parameters. We also note that when analysing the energy gap, special care has to be taken regarding the treatment of different boundary conditions.
\end{remark}

Note that the argument above assumes the Hamiltonian to be the sum of projectors. The systems studied in our paper do not fit that picture. However, since they are frustration-free we can still obtain a bound using the following observation:
\begin{corollary}\label{thm:corollary}
For a frustration-free model with an $n$-fold degenerate zero-energy ground state and a $p$-dimensional local Hilbert space, with $p^2>n$, we can arrange the two-site eigenvalues $\tilde{\Delta}_2^k$ and normalised eigenstates $\ket{\tilde{\psi}_k}$ such that $\tilde{\Delta}_2^k\le\tilde{\Delta}_2^l$ for $k<l$ and $\tilde{\Delta}_2^1,\ldots,\tilde{\Delta}_2^n=0$. Then two-site Hamiltonian can be bounded from below as follows,
\begin{align}
\tilde{H}_{j,j+1}&=\sum_{k=n+1}^{p^2} \tilde{\Delta}_2^k \ket{\tilde{\psi}_k}\bra{\tilde{\psi}_k}
=\tilde{\Delta}_2^{n+1}\sum_{k=n+1}^{p^2} \ket{\tilde{\psi}_k}\bra{\tilde{\psi}_k}+\sum_{k=n+1}^{p^2} \bigl(\tilde{\Delta}_2^k -\tilde{\Delta}_2^{n+1}\bigr)\ket{\tilde{\psi}_k}\bra{\tilde{\psi}_k}\\
&\ge\tilde{\Delta}_2^{n+1}P_{j,j+1}=\tilde{\Delta}_2 P_{j,j+1},
\end{align}
with the gap $\tilde{\Delta}_2=\tilde{\Delta}_2^{n+1}$ of the frustration-free Hamiltonian $\tilde{H}_{j,j+1}$ and $P_{j,j+1}$ denoting the projector onto the space orthogonal to its ground-state manifold. The min-max theorem~\cite{HornJohnson12} then implies for the gap $\tilde{\Delta}_N$ of the frustration-free model on $N$ sites
\begin{equation}\label{eq:projtofullhamgap}
\tilde{\Delta}_N\ge\tilde{\Delta}_2\Delta_N.
\end{equation}
\end{corollary}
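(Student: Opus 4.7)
The plan is to establish the two claims in sequence: first the local operator inequality $\tilde{H}_{j,j+1}\ge\tilde{\Delta}_2\,P_{j,j+1}$, and then the global bound $\tilde{\Delta}_N\ge\tilde{\Delta}_2\Delta_N$ by invoking the min-max theorem on the full Hamiltonian. Both parts are essentially linear-algebraic: most of the first is already contained in the displayed identity, and the second amounts to a careful application of operator monotonicity.

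For the first part, I would spell out the spectral decomposition of $\tilde{H}_{j,j+1}$ in its orthonormal eigenbasis. Since by convention $\tilde{\Delta}_2^1=\cdots=\tilde{\Delta}_2^n=0$, only the terms with $k\ge n+1$ survive. Writing $\tilde{\Delta}_2^k=\tilde{\Delta}_2^{n+1}+(\tilde{\Delta}_2^k-\tilde{\Delta}_2^{n+1})$ with the second summand non-negative by the ordering, the operator splits into $\tilde{\Delta}_2\,P_{j,j+1}$ plus a positive-semi-definite remainder, where $P_{j,j+1}=\sum_{k\ge n+1}\ket{\tilde{\psi}_k}\bra{\tilde{\psi}_k}$. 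The hypothesis $p^2>n$ guarantees that $P_{j,j+1}\neq 0$, so the bound is non-trivial whenever $\tilde{\Delta}_2>0$.

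For the second part, summing the local inequality over $j$ yields the global operator inequality $\tilde{H}\ge\tilde{\Delta}_2 H_N$ with $H_N=\sum_j P_{j,j+1}$. Before invoking min-max I would verify that the two positive-semi-definite operators share the same kernel: by the same argument as in the proof of Theorem~\ref{thm:Witten}, frustration-freeness implies $\ker(\tilde{H})=\bigcap_j\ker(\tilde{H}_{j,j+1})$, and since $P_{j,j+1}$ is constructed to have the same kernel as $\tilde{H}_{j,j+1}$, intersecting over $j$ gives $\ker(\tilde{H})=\ker(H_N)$. The Courant--Fischer min-max principle then yields $\lambda_{k}(\tilde{H})\ge\tilde{\Delta}_2\lambda_{k}(H_N)$ for every $k$; specialising to $k=n+1$, the first index above the common $n$-dimensional kernel, gives $\tilde{\Delta}_N\ge\tilde{\Delta}_2\Delta_N$ as desired.

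The main obstacle I anticipate is the kernel-matching step: without confirming that $\ker(\tilde{H})$ and $\ker(H_N)$ have the same dimension, the min-max comparison pairs mismatched eigenvalues and the clean multiplicative bound breaks. A subtle case to flag is when some local kernel is strictly larger than $n$-dimensional; then $\tilde{\Delta}_2=0$ and the bound is vacuous, consistent with the fact that $P_{j,j+1}$ is really the projector onto the complement of the lowest-$n$ eigenspace rather than onto the full orthogonal complement of the local kernel.
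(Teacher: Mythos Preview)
Your proposal is correct and follows essentially the same approach as the paper: the operator inequality is obtained exactly as in the displayed identity of the Corollary, and the global bound is the consequence of summing over $j$ and applying Courant--Fischer. The paper merely asserts ``the min-max theorem then implies'' without further comment, whereas you (rightly) spell out the kernel-matching step and the edge case $\tilde{\Delta}_2=0$; these are useful elaborations rather than a genuinely different route.
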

Thus in order to prove that a frustration-free Hamiltonian possesses an energy gap $\tilde{\Delta}$ above its ground states in the thermodynamic limit, we proceed as follows: (i) We consider projectors $P_{j,j+1}$ onto the space orthogonal to the local ground states on the lattice sites $j$ and $j+1$ and determine the gap $\Delta_2$ above these ground states. (ii) From that we construct the auxiliary $m$-site Hamiltonian $h_{1,m}=\sum_{j=1}^m P_{j,j+1}$ and determine its energy gap $\Delta_m$. (iii) If this gap satisfies $\min_{m'=2,\ldots,m}\left\{\Delta_{m'}\right\}>1/(m-1)$, then the auxiliary $N$-site Hamiltonian $H_N$ will have a gap $\Delta_N$ satisfying \eqref{eq:lemmm}. (iv) Due to \eqref{eq:projtofullhamgap} the gap $\tilde{\Delta}$ of the original frustration-free Hamiltonian is bounded from below by
\begin{equation}
\tilde{\Delta}=\lim_{N\to\infty} \tilde{\Delta}_N\ge \lim_{N\to\infty}\tilde{\Delta}_2\Delta_N\ge\tilde{\Delta}_2\frac{m-1}{m-2}\left(\min_{m'=2,\ldots,m}\left\{\Delta_{m'}\right\}-\frac1{m-1}\right).
\label{eq:gapDirk}
\end{equation}
Every $m>2$ gives a lower bound on the gap, so the supremum over subsysten sizes is also a lower bound. Usually, the bound increases for increasing $m$. Since the computation of $\Delta_m$ requires exact diagonalization of a $p^m\times p^m$ matrix, the maximal feasible $m$ is constrained by computational resources. In the following appendices we apply this line of argument to several models.

\subsection{Gap in $\mathbb{Z}_3$-models}\label{sec:Z3gap}
\begin{figure}[t]
	\centering
	\includegraphics[width=0.7\textwidth]{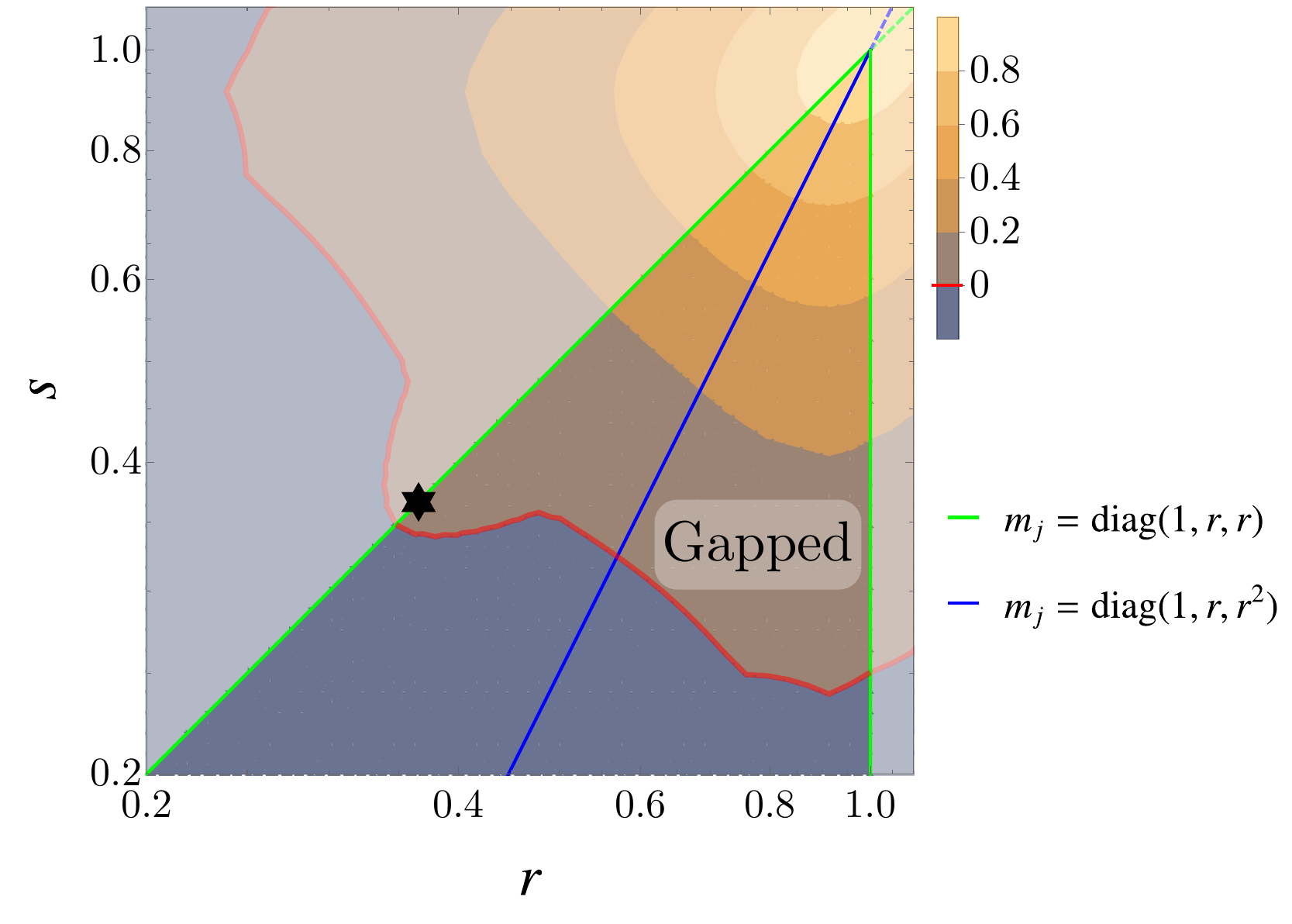}
	\caption{Log-log contour plot of the lower bound $\max_{m=3,\ldots,7}\frac{m-1}{m-2}\left(\min_{m'=2,\ldots,m}\left\{\Delta_{m'}\right\}-\frac1{m-1}\right)$ for the deformation in Equation~\eqref{eq:gendeform3}. For a finite lower bound the system is gapped in the thermodynamic limit $N\to\infty$, ie, all points above the red line yield gapped systems. The blue and green lines correspond to the ground states of \eqref{eq:Z3XY} (for $\theta=0$) and \eqref{eq:Z3ANNNC}, respectively. The star is the special point with $g_2=0$.}
	\label{fig:Gaprs}
\end{figure}
In order to treat both $\mathbb{Z}_3$-models \eqref{eq:Z3XY} (for $\theta=0$) and the models discussed in Section~\ref{sec:Z3ANNNCgeneral} within the same framework, we consider the general, diagonal deformation with 
\begin{equation}
\label{eq:gendeform3}
m_j=\begin{pmatrix}
1&&\\&r&\\&&s
\end{pmatrix},
\end{equation}
where $r,s>0$. For each point in the $(r,s)$-plane we get a lower bound on the thermodynamic gap by means of \eqref{eq:lemmm}, provided that for some feasible $m$ the relevant energy gap of the auxiliary $m$-site Hamiltonian satisfies $\min_{m'=2,\ldots,m}\left\{\Delta_{m'}\right\}>1/(m-1)$. Computational resources allow us to go up to $m=7$. In Figure~\ref{fig:Gaprs} we have depicted the maximal lower bound for $m=3,\ldots,7$ in the $(r,s)$-plane obtained from this. Note that this is a lower bound for the gap of the auxiliary projector Hamiltonian. For a particular parent Hamiltonian like \eqref{eq:Z3XY} and \eqref{eq:Z3ANNNC}, the true gap depends on the local gap $\tilde{\Delta}_2$. As long as the local parent Hamiltonian has the same degeneracy as the local auxiliary Hamiltonian it is gapped for the same parameter regime, by virtue of \eqref{eq:gapDirk}. We only consider the triangle $s\le r\le 1$, since due to the dihedral symmetry of the model there is a six-fold symmetry in the $(r,s)$-plane. The red line denotes the boundary of the region that is definitively gapped, ie, for all points above this line in the $(r,s)$-plane it is assured that the full system is gapped in the thermodynamic limit. The blue and green lines correspond to the ground states of \eqref{eq:Z3XY} (for $\theta=0$) and \eqref{eq:Z3ANNNC}, respectively, with the black star indicating the model \eqref{eq:Z3ANNNC} at the special point $g_2=0$. Given the six-fold symmetry in the $(r,s)$-plane, we have to be careful how to display the green $(1,r,r)$ and blue $(1,r,r^2)$ lines. For the blue line, note that $(1,r,r^2)\simeq(r^{-2},r^{-1},1)$, since the Hamiltonian is invariant under rescaling of $M$. Also the freedom in the form of the dihedral symmetry lets us write $(1,r,r^2)\simeq (1,r^{-1},r^{-2})$, permuting the entries. Hence the blue line for $r>1$, maps to the blue line for $r<1$ under the symmetry. Using the same reasoning for the green line we obtain $(1,r,r)\simeq(r^{-1},1,1)\simeq(1,1,r^{-1})$, mapping $(1,r,r)$ for $r>1$ to $(1,1,s)$ for $s=r^{-1}$.

Let us zoom in on the two lines $s=r^2$ and $s=r$ that correspond to the ground states of \eqref{eq:Z3XY} and \eqref{eq:Z3ANNNC} respectively. In Table~\ref{tab:gapdifferentm} we list the lower and upper limit $r^{\rm low,up}$ for the gapped region for different sub-system sizes $m$. For $s=r^2$ the upper limit is simply $r^{\rm up}=1/r^{\rm low}$, as follows from the symmetry discussed above. As $m$ increases we see that the region increases in both directions.

On the other hand, for $s=r$ something peculiar occurs. The lower limit $r^{\rm low}$ is significantly better for $m=3$ than for $m=4,\ldots,7$. This lower limit has the exact value of $r^{\rm low}=2^{1/4}-2^{-1/4}=\sqrt{\frac32\sqrt{2}-2}\approx 0.3483$ The upper limit, on the other hand, does become more informative as $m$ increases.
\begin{table}[t]
	\centering
	\begin{tabular}{c|cc|cc}
		&\multicolumn{2}{c|}{$s=r^2$ ($\mathbb{Z}_3$-XY model)} &\multicolumn{2}{c}{$s=r$ ($\mathbb{Z}_3$-ANNNP model)} \\
		$m$ & $r^{\rm low}$ & $r^{\rm up}$ & $r^{\rm low}$ & $r^{\rm up}$\\
		\hline
		3		&	0.6337 & 1.5779 & \textbf{0.3483 }& 2			\\
		4		&	0.6204 & 1.6119 & 0.4216 & 2.6796   \\				
		5		&	0.6026 & 1.6595 & 0.4259 & 3.0146	\\				
		6		&	0.5853 & 1.7086 & 0.4200 & 3.6233	\\				
		7		&	\textbf{0.5695} & \textbf{1.7560} & 0.4116 & \textbf{3.9912}	\\				
	\end{tabular}
	\caption{Lower and upper limit $r^{\rm low,up}$ for the gapped regions of the $\mathbb{Z}_3$-XY model \eqref{eq:Z3XY} and $\mathbb{Z}_3$-ANNNP model \eqref{eq:Z3ANNNC} as deduced from different sub-system sizes $m$. The bold values indicate the extremal values which are stated in the main text.}\label{tab:gapdifferentm}
\end{table}

In total we deduce that the full system \eqref{eq:Z3XY} (for $\theta=0$) is gapped in the thermodynamic limit for $0.5695\lesssim r\lesssim 1/0.5695$ and \eqref{eq:Z3ANNNC} for  $0.3483\lesssim r\lesssim 3.9912$. In particular this implies that in this parameter regime the models can be adiabatically connected to the classical model obtained for $r=s=1$ as sketched in Figure~\ref{fig:linkmodels}.

\subsection{Gap in $\mathbb{Z}_4$-ANNNP model}\label{sec:Z4ANNNCgap}
We can apply the same method to analyse the gap of the $\mathbb{Z}_4$-ANNNP model \eqref{eq:ANNNCZ4}. For this model it is sufficient to consider $m=3$, since
\begin{equation}
	\Delta_3=\frac12+\frac{\min(r^2,r^{-2})}{r^2+r^{-2}},
\end{equation}
which is strictly larger than $1/2$ for $0<r<\infty$. Thus we deduce for the gap in the thermodynamic limit
\begin{equation}
	\tilde{\Delta}\ge\frac{4\min(r^2,r^{-2})}{r^2+r^{-2}}.
	\label{eq:Z4lowerbound}
\end{equation}
Instead of using Corollary~\ref{thm:corollary}, the lower bound \eqref{eq:Z4lowerbound} can also be obtained from the mapping to two decoupled ANNNI chains, together with the lower bound for the energy gap along the Peschel--Emery line of the ANNNI chain obtained in Reference~\cite{KST15}.

\subsection{Gap in $\mathbb{Z}_6$-ANNNP model}\label{sec:Z6ANNNCgap}
For the $\mathbb{Z}_6$-ANNNP model \eqref{eq:ANNNCZ6} the condition of $\Delta_3>1/2$ shows that the model is gapped at least in the interval $0.5754\lesssim r\lesssim 1/0.5754$, where we have used the invariance of the model under $r\to 1/r$. The region does not improve for $m=4,5$ and higher sub-systems sizes are not accessible with our current resources.


\end{document}